\pgfplotsset{compat=1.17} 
\renewcommand{\l}{\left(}
\renewcommand{\r}{\right)}
\newcommand{\ol}{\overline}
\newtheorem*{theorem*}{Theorem}
\Crefname{notat}{Notation}{Notations}
\newtheorem{theorem}{Theorem}[section]
\newtheorem{lemma}[theorem]{Lemma}
\theoremstyle{definition}
\newtheorem{definition}[theorem]{Definition}
\theoremstyle{remark}
\newtheorem{remark}{Remark}
\newtheorem{example}[theorem]{Example}
\title{Dynamical Analysis of the EIP-1559 Ethereum Fee Market}
\author[1]{Stefanos Leonardos\thanks{In alphabetical order.}}
\newcommand\CoAuthorMark{\footnotemark[\arabic{footnote}]}
\author[2]{Barnab\'e Monnot\protect\CoAuthorMark}
\author[1]{Daniel Reijsbergen\protect\CoAuthorMark}
\author[1]{Stratis Skoulakis\protect\CoAuthorMark}
\author[1]{Georgios Piliouras}
\affil[1]{Singapore University of Technology and Design, \emph{\{stefanos\_leonardos,daniel\_reijsbergen,efstratios,georgios\}@sutd.edu.sg}}
\affil[2]{Ethereum Foundation, \emph{barnabe.monnot@ethereum.org}}
\date{}
\begin{document}


\maketitle

\begin{abstract}

Participation in permissionless blockchains results in competition over system resources, which  needs to be controlled with fees.
Ethereum's current fee mechanism is implemented via a first-price auction that results in unpredictable fees as well as other inefficiencies.
EIP-1559 is a recent, improved proposal that introduces a number of innovative features such as a dynamically adaptive basefee that is burnt, instead of being paid to the miners. Despite intense interest in understanding its properties, 
several basic questions such as whether and under what conditions does this protocol self-stabilize  have remained elusive thus far.  \par
We perform a thorough analysis of the resulting fee market dynamic mechanism 
 via a combination of tools from game theory as well as dynamical systems. We start by providing bounds on the step-size of the base-fee update rule that suffice for global convergence to equilibrium via Lyapunov arguments. In the negative direction, we show that for larger step-sizes instability and even formal Li-Yorke chaos are possible under a wide range of settings. We complement these topological results with quantitative bounds on the possible range of basefees.  We conclude our analysis with a thorough experimental case study that corroborates our theoretical findings.
\end{abstract}

\section{Introduction}\label{sec:intro}


The emergence of decentralized, Turing-complete blockchains, such as Ethereum~\cite{wood2014ethereum}, ushered in the possibility of creating alternative economic systems, where traditional institutions (such as exchanges, banks, e.t.c.) are implemented in open-source code and where the state of the system/universal computer is stored in an immutable public blockchain. The extreme versatility of such systems, at least in terms of their fundamental capabilities, naturally raises a lot of critical design considerations as these abstract ideas are fleshed out into concrete implementations. Moreover, as participation in these systems steadily increases over time, these initial designs face novel demands and some careful adaptation becomes necessary.





Arguably, one of the most critical 
 real-world design decisions in Ethereum, as well as in any other programmable blockchain, is how the protocol decides on the costs/rewards structure for the different types of  
 participating entities. The protocol charges users fees for having their transactions processed by the network and included in the blockchain. These transactions fees are typically referred to as ``gas fees".
 These fees are then distributed to the miners rewarding them for dedicating  computational resources to preserving the safety of the blockchain.
 
 Ethereum's current fee system has been recognized as an important design challenge. The issue primarily lies on the decision to set fees by using a simple first price auction mechanism. Effectively, all users submit their bids in regards to how much they are willing to pay to have their transactions included in the blockchain and the miners 
 typically select the highest priced entries for inclusion given the block capacity constraints. Due to the non-truthful nature of first price auctions, choosing an appropriate bidding fee is a non-trivial task and  users can end up significantly overpaying for system participation. 
 
 From a traditional mechanism design perspective, the solution to the aforementioned problem seems relatively straightforward: Replace the first price auction with either a Vickrey – Clarke – Groves (VCG) auction~\cite{Nis07,Var07} or a (generalized) second price auction~\cite{Bre12,Car15}, which reduce the strategic complexity on the side of the bidders, lead to more efficient outcomes and are known to work well in practice (e.g., internet advertising). Unfortunately, such approaches can be easily exploited and gamed by miners who can artificially increase demands for their blocks, increasing the resulting fees while decreasing meaningful system participation. Moreover, such mechanisms are vulnerable to collusion~\cite{conitzer2006failures,bachrach2010honor}.

Recently, a new proposal (EIP-1559) has been put forward to address these issues~\cite{Con19}.
 A key aspect of this mechanism is the introduction of a \textit{basefee that is automatically adjusted by the protocol} depending on how congested the network is. This basefee effectively
 plays the role of a reserve price, matching supply and
demand.
 Critically, this basefee is burned, which prevents the emergence of perverse incentives where miners can extract increased fees from the users by acting dishonestly. Users 
  seeking fast inclusion of their transactions
can supplement the basefee with a tip, which is the only fee that is received by the miners. An economic analysis of EIP-1559 has identified desirable properties, e.g., it is incentive compatible for myopic miners and 
   as well as for users \textit{except during time periods with excessively low base fees}~\cite{roughgarden2020transaction}. Of course, to provide insights about whether such conditions will be satisfied in practice an economic analysis alone is not sufficient as one needs to explicitly analyze the dynamic evolution of the mechanism parameters over time. This raises our driving question: \textit{Under which conditions do the EIP-1559 dynamics self-stabilize? When these conditions are not satisfied how complex, unpredictable can the resulting behavior be?} 
 

{\bf Our results.}
We perform both a theoretical as well as experimental analysis of the dynamics and  stability properties of the EIP-1559 protocol. In particular, we investigate not only sufficient conditions for network stability and convergence to equilibrium but furthermore, we provide for the first-time to our knowledge 
a stress-test type of analysis where we push the system parameters past its stable range and prove phase transitions/bifurcations as well as the formal onset of chaos. \par
Our main observation is that the basefee adjustment parameter (step-size) has a critical impact in the stability of the system. In the theoretical part of the paper (\Cref{sec:analysis}), we provide threshold bounds for the step-size which allow the system to stabilize (\Cref{t:threshold_convergence}). For larger values of the step-size (or of the other critical parameters of the system, transaction demand and user valuations), we show that the basefee dynamics may become formally chaotic (\Cref{thm:chaos}). However, even in this unstable regime, the basefee remains within a bounded region and is relatively well behaved. By contrast, adverse effects are observed in the block occupancies (which may oscillate between their extremes, full to empty and vise versa).\par
On the experimental side we validate our theoretical findings by showcasing high variance periods where blocks alternate between full and empty state and basefee spikes up and down, using a fee market simulation library with agent-based components (\Cref{sec:experiments}). We first look at the impact of three variables on the prevalence of these high variance periods: the demand variance, or how ``noisy'' the demand process is; the initial condition of the demand process, from just enough to fill blocks entirely to twice that demand; and the tolerance of the transaction pool eviction policy, with more tolerant pools keeping transactions even as their fee cap stands at a lower value than the basefee. We find all three variables positively correlate with more appearances of high variance periods, highlighting the forces inducing variance in the fee market. We additionally find that using stricter pool eviction policies hurts user efficiency and miner revenue, casting doubt on the incentive compatibility of this strategy to yield more stable basefee updates.
\section{Model}\label{sec:model}
Our description of the model consists of two parts. The first describes the transaction fee mechanism of EIP1559 with a special focus on the dynamic adjustment of the basefee (\Cref{sub:eip}) and the second describes the assumptions concerning agents' behavior that we study in this paper (\Cref{sub:users,sub:fluid}). 

\subsection{Transaction Fee Market and EIP1559}\label{sub:eip}
We consider a blockchain-enabled economy in which users make transactions over a distributed network.\footnote{Our analysis is based on the Ethereum blockchain. However, there are other blockchains, such as Filecoin \cite{Fil20}, that implement very similar mechanisms and the main ideas of our results (up to technical details) readily extend to these settings as well.} Users submit their transactions to a common pool together with a bid which specifies how much they are willing to pay for the computational resources that are required for their transactions to be processed. The transactions, along with the bids, are viewed by the miners who select which transactions to include in the blocks that they create. In existing mechanisms (including Ethereum's current economic model), bids comprise a single transaction fee. Miners can sort the transactions and typically select the ones with the highest fees. The miner who will include a transaction in a valid block receives the entire fee in a process that closely resembles a generalized first price auction. \par
According to the proposed reform of EIP1559, bids comprise two elements $(f,p)$: (i) the \emph{feecap}, $f$, which is the maximum amount that the user is willing to pay for their transaction to be processed, and (ii) the \emph{premium}, $p$, which is the maximum tip that the user is willing to pay to the miner who will eventually process their transaction. In particular, a user who will get their transaction included in the blockchain will never pay more than the feecap and the miner who will process the transaction will never receive more than the premium. \par
The main element of EIP1559 and its main difference from existing mechanisms is the stipulation of a dynamically adjusted \emph{basefee}, $b_t, t>0$, where $t$ denotes the block height. Every transaction that gets included in a block $B_t, t>0$ needs to pay the basefee, $b_t$, that is valid at that block. Instead of being transferred from the user to the miner, the basefee is \emph{burnt}, i.e., it is permanently removed from the circulating supply of the native currency (e.g., ETH). For each included transaction, miners will receive the minimum between the premium and the difference between the feecap and the basefee. Specifically, the \emph{miner's tip} is defined by
\begin{equation}\label{eq:tip}
\text{miner's tip}: =\min{\{f-b_t,p\}}.
\end{equation}
Blocks have size $T$ and, in EIP1559, a target block load $T/2$.\footnote{Size is measured in gas, i.e., typically $T$ denotes the gas limit. Here, we express all measurements in units per gas, so under the assumption that all transactions use the same amount of gas, one may think of $T$ as number of transactions.} Let $g_t$ denote the number of transactions that get included in block $B_t$. Since $g_t$ depends on the basefee, $b_t$, we will write $g_t\mid b_t$ to denote the transactions that get included in $B_t$ given that the basefee is equal to $b_t$. The basefee is updated after every block according to the following equation
\begin{equation}\label{eq:basefee}
b_{t+1}=b_t \l 1+d\cdot\frac{g_t\mid b_t-T/2}{T/2}\r, \qquad \text{for any } t\in\mathbb N.
\end{equation}
where $d$ denotes an adjustment factor (or step size), currently set at $d=0.125$ \cite{Rob20e,Mon20}. Equation \eqref{eq:basefee} suggests that the basefee will increase if the load of block $B_t$ is larger than the target block load, i.e., if there is increasing demand or congestion in the system, and will decrease otherwise. The magnitude of the change is regulated by the excess (shortage) of transaction load compared to the target load (currently $T/2$) and parameter $d$. Our main goal in this paper is to analyze the stability and properties of the dynamical system that is determined by equation \eqref{eq:basefee}.

\subsection{Behavioral Model: Miners and Users}\label{sub:users}
In general, we will assume that users (transactions) arrive to the pool according to a random process. We will write $N_t$ to denote the random number of transactions that arrive between two consecutive blocks $B_t,B_{t+1}$ for $t\ge0$. We assume that $N_t\sim \mathcal P(\lambda T)$ for any $t\ge0$, where $\mathcal P(\lambda)$ denotes the Poisson distribution of parameter $\lambda T$. To avoid trivial cases, we will assume that $\lambda >1/2$, i.e., that the arrival rate is larger than the target block load. For the theoretical analysis, we will assume that users leave the pool if their transaction is not included in the next block and return according to the specified arrival process.\footnote{This assumption only reduces unnecessary complexities in the analysis and is relaxed in the simulations without significant effect in the results.} Whenever necessary, we will index users (transactions) with $i,j\in\mathbb N$.\par
As mentioned above, miners view all transactions in the pool along with their bids, $(f,p)$, and decide which transactions to include in the blocks that they mine. We assume that miners are willing to process transactions only if the fees that they receive are at least some commonly known $\epsilon>0$. This is due to the intrinsic marginal cost for miners to include the transaction. For instance, each transaction increases the size (in bytes) of the block and its propagation time over the network of miners, leading to an increase in the risk of producing a stale block (called \emph{uncle}).\footnote{It is expected for user wallets to encode this default $\epsilon$ in their fee estimation strategies, thus supporting common knowledge among fee market participants. At the moment, the value of 1 nanoETH ($10^{-9}$ ETH) is recommended as such a default by the EIP itself \cite{But19}.} Thus, miners will select a transaction to be included in block $B_t$ if $f\ge b_t+\epsilon$, and $p\ge \epsilon$, i.e., if the feecap is large enough to cover both the basefee and the minimum acceptable premium for miners, and the premium is large enough to satisfy the miner's tip. These conditions are summarized in the following minimum \emph{inclusion requirement}
\begin{equation}\label{eq:include}
\text{miner's tip}=\min{\{f-b_t,p\}}\ge\epsilon.
\end{equation} 
Finally, each user $i\in \mathbb N$ has a valuation $v_i$ which is drawn from some common (for all users) distribution function $v\sim F$ with strictly positive support $S\subseteq \mathbb R_+$. For convenience, we assume throughout that $F$ is continuous and strictly increasing (i.e., non-atomic). We will write $\ol{F}$ to denote the \emph{survival function} of $F$, i.e., $\ol{F}(x)=1-F(x),$ for any $x\in \mathbb R$. For the most part, we will assume that users are \emph{non-strategic}, which means that they bid their valuations as feecaps, i.e., $f_i=v_i$ and that they set a premium equal to the miner's acceptance threshold $\epsilon$, i.e., $p_i=\epsilon$ for each user $i\in \mathbb N$. In short, a non-strategic user with valuation $v_i$, is defined as a user who bids $(f,p)=(v_i,\epsilon)$. 

\subsection{Non-Atomic Model}\label{sub:fluid}
Based on the above assumptions, the dynamical system that is determined by equation \eqref{eq:basefee} is a \emph{discrete time, discrete space stochastic process} $\{b_t\}_{t\ge 0}$. The source of randomness is the term $g_t\mid b_t$, i.e., the number of transactions that get included in block $B_t$ given the basefee $b_t$. However, for the most part of the analysis, it will be sufficient to consider a non-atomic (or fluid) approximation of the above system. Accordingly, we will assume that there is a fixed number of $\lambda T$ arrivals between each two consecutive blocks and that the fraction of users who are willing to pay the basefee (plus the miners' premium) is equal to $\lambda T\ol{F}(b_t+\epsilon)$. Taking into account that a block has a maximum size, $T$, the above lead to the following \emph{discrete time, continuous space deterministic process}
\begin{equation*}
b_{t+1}=b_t+b_t\frac{d}{T/2}\l\min{\{T,\lambda T\ol{F}(b_t+\epsilon)\}}-T/2\r.
\end{equation*}
which after some straightforward simplifications leads to 
\begin{equation}\label{eq:fluid}
b_{t+1}=b_t+b_td\min{\{1,2\lambda \ol{F}(b_t+\epsilon)-1\}}.
\end{equation}
The analysis of the dynamical system $\{b_t\}_{t\ge0}$ that is defined by equation \eqref{eq:fluid} will be the main subject of the theoretical part of this paper. In the simulations, we again employ the discrete model described above. 

\begin{remark}\label{rem:approximation}
For practical purposes, the approximation of discrete arrivals by a continuous process is justified by the fact that $g_t$ actually denotes \emph{gas units}, which offer a much larger granularity than exact numbers of transactions. Moreover, the number of arrivals between consecutive blocks can be considered fairly constant during stationary periods which are of interest here. If the demand shifts to a new stationary level, then the basefee is also expected to shift to adjust to this new level. From a theoretical perspective, the deterministic dynamical system in equation \eqref{eq:fluid} can be justified as an approximation of the sequence of conditional expectations $\mathbb E[b_{t+1}|b_t]$ as explained in \Cref{lem:preliminary} below.
\end{remark}
\section{Analysis}\label{sec:analysis}

Our main task in this section is to analyze the convergence and stability properties of the dynamical system $\{b_t\}_{t\ge0}$ of equation \eqref{eq:fluid}.

\subsection{Preliminaries}\label{sub:preliminaries}

As mentioned in \Cref{rem:approximation}, our first observation is that, the deterministic dynamical system in equation \eqref{eq:fluid} can be justified as an approximation of the sequence of conditional expectations $\mathbb E[b_{t+1}|b_t]$. In particular, the base fee of block $t+1$ depends only on the state of block $t$, which means that the \emph{stochastic process} $\{b_t\}, t\ge0$ has the \emph{Markov property}.\footnote{Formally, a stochastic process $X_t, t\ge 0$ is \emph{Markovian}, with respect to a filtration ${\mathcal F}_t=\sigma(X_s\mid s\le t)$, if for any fixed time $t\ge0$, the future of the process, i.e., $X_{t+1}$, is independent of $\mathcal F_t$ given $X_t$.} This allows us to derive a closed form formula for the conditional expectation $\mathbb E[b_{t+1}\mid b_1,b_2,\dots,b_t]$ as shown in \Cref{lem:preliminary}.

\begin{lemma}\label{lem:preliminary}
Suppose that the number $N_t$ of transactions that arrive to the transaction pool between consecutive blocks $B_t,B_{t+1}$ follows a Poisson process with rate $\lambda T$, with $\lambda >1/2$ for any $t\ge0$. Further, suppose that users valuations $v_i, i\in \mathbb N$ are independently drawn from a common distribution $v\sim F$ for some continuous and strictly increasing distribution function $F$ and that users are nonstrategic, i.e., that their bids satisfy $(f,p)=(v_i,\epsilon)$. Then, it holds that the stochastic process $\{b_{t}\}_{t\ge0}$ of equation \eqref{eq:basefee}
\[
b_{t+1}=b_t \l 1+d\cdot\frac{g_t\mid b_t-T/2}{T/2}\r, \qquad \text{for any } t\in\mathbb N.\]
has the Markov property and
\begin{equation}\label{eq:expectation}
\mathbb E[b_{t+1}\mid b_t]\le b_t+b_td\min{\{1, 2\lambda \ol{F}(b_t+\epsilon)-1\}}.
\end{equation}
\end{lemma}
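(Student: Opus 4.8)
The plan is to derive both assertions of \Cref{lem:preliminary} — the Markov property and the inequality \eqref{eq:expectation} — from one structural observation: because transactions that are not included leave the pool, block $B_t$ is filled only from the batch of $N_{t-1}$ transactions that arrived between $B_{t-1}$ and $B_t$, and this batch is probabilistically independent of everything that determines $b_1,\dots,b_t$. Concretely, unwinding the recursion \eqref{eq:base fee} shows that $b_{t+1}$ is a fixed measurable function of the initial fee $b_1$ and of the inclusion counts $g_1\mid b_1,\dots,g_t\mid b_t$; and under nonstrategic bidding the inclusion requirement \eqref{eq:include} collapses to $v_i\ge b_t+\epsilon$, so $g_t\mid b_t=\min\{T,K_t\}$, where $K_t$ is the number of transactions in the batch $N_{t-1}$ with valuation at least $b_t+\epsilon$. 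Hence $g_t\mid b_t$ — and therefore $b_{t+1}$ — is a function of $b_t$ alone together with the fresh randomness $\xi_t:=(N_{t-1},\text{valuations of that batch})$.

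For the Markov property I would note that $\xi_t$ is independent of $\mathcal F_t=\sigma(b_1,\dots,b_t)$: by the recursion, $b_1,\dots,b_t$ are measurable with respect to the earlier arrival batches $N_0,\dots,N_{t-2}$ and their valuations, and these are independent of $\xi_t$ by the independent-increments property of the Poisson arrival process together with the i.i.d. valuation assumption. Since $b_{t+1}=\Phi(b_t,\xi_t)$ for a fixed $\Phi$ and $\xi_t\indep\mathcal F_t$, the conditional law of $b_{t+1}$ given $\mathcal F_t$ depends on $\mathcal F_t$ only through $b_t$; this is exactly the Markov property, and in particular $\mathbb E[b_{t+1}\mid b_1,\dots,b_t]=\mathbb E[b_{t+1}\mid b_t]$.

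For the bound, I condition on $b_t$ and compute $\mathbb E[g_t\mid b_t]$. Thinning the $\mathcal P(\lambda T)$ arrival count independently with retention probability $\ol F(b_t+\epsilon)$, the Poisson thinning property gives that $K_t$ is, conditionally on $b_t$, distributed as $\mathcal P(\lambda T\,\ol F(b_t+\epsilon))$, so $\mathbb E[K_t\mid b_t]=\lambda T\,\ol F(b_t+\epsilon)$. Because $x\mapsto\min\{T,x\}$ is concave, Jensen's inequality yields $\mathbb E[g_t\mid b_t]=\mathbb E[\min\{T,K_t\}\mid b_t]\le\min\{T,\lambda T\,\ol F(b_t+\epsilon)\}$. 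Substituting into $\mathbb E[b_{t+1}\mid b_t]=b_t\bigl(1+d\bigl(2\,\mathbb E[g_t\mid b_t]/T-1\bigr)\bigr)$ (legitimate since $b_t>0$) and using the identity $\min\{2,2\lambda\,\ol F(b_t+\epsilon)\}-1=\min\{1,2\lambda\,\ol F(b_t+\epsilon)-1\}$ gives \eqref{eq:expectation}.

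The argument is short once the dependency structure is spelled out; the only point that needs care is the index bookkeeping — being precise that the batch feeding $B_t$ is $N_{t-1}$ and that none of its randomness has already entered $b_t$ — so that $\xi_t\indep\mathcal F_t$ genuinely holds. It is worth flagging where the inequality rather than equality comes from: it is exactly the Jensen gap induced by the block-capacity cap $\min\{T,\cdot\}$, so \eqref{eq:expectation} is tight precisely when $\lambda T\,\ol F(b_t+\epsilon)\le T$ almost surely, i.e.\ when the cap never binds — which is the regime in which the fluid recursion \eqref{eq:fluid} exactly tracks the conditional mean.
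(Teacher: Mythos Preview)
Your argument is correct and follows the same route as the paper's: reduce the inclusion requirement to $v_i\ge b_t+\epsilon$, compute the conditional mean of the eligible count (the paper does this via Wald's equation on a Bernoulli sum where you invoke Poisson thinning, and it states the Markov property in one line rather than spelling out the random-function representation), then bound $\mathbb E[\min\{T,K_t\}\mid b_t]$ by $\min\{T,\mathbb E[K_t\mid b_t]\}$ using concavity. One small slip in your closing commentary, not in the proof itself: since $K_t$ is Poisson and hence unbounded, the cap $\min\{T,\cdot\}$ binds with positive probability whenever the rate is positive, so the Jensen gap is strict and the fluid recursion is only an upper bound on the conditional mean rather than an exact match in any nondegenerate regime.
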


\begin{proof}
The Markov property is immediate from the definition of $b_{t+1}$ since $b_{t+1}$ is fully determined by $b_t$ and $g_t\mid b_t$. Thus, $\mathbb E[b_{t+1}\mid b_0,\dots,b_t]= \mathbb E[b_{t+1}\mid b_t]$ for any $t\ge0$, with
\begin{align*}
    \mathbb E[b_{t+1}\mid b_t]&=\mathbb E\left[b_t \l 1+d\cdot\frac{g_t\mid b_t-T/2}{T/2}\r\Big | b_t\right]\\&=b_t \l 1+d\cdot\frac{\mathbb{E}[g_t\mid b_t]-T/2}{T/2}\r.
\end{align*}
To proceed with the calculation of the conditional expectations $\mathbb{E}[g_t\mid b_t]$, we define the random variables
\[X_i=\begin{cases}1, & \text{if $i$'s valuation satisfies the inclusion requirement}\\ 
&\text{in block $B_t$},\\0,& \text{otherwise.}\end{cases}\]
Recall from equation \eqref{eq:include}, that the minimum inclusion requirement is that $\min{\{f-b_t,p\}}\ge\epsilon$. Since users bid $(f,p)=(v_i,\epsilon)$ by the assumption that they are nonstrategic, it holds that
$\min{\{f-b_t,p\}}=\min{\{v_i-b_t,\epsilon\}}$. Hence, the inclusion requirement is satisfied if and only if $ v_i-b_t\ge\epsilon$ which implies that
$X_i=\mathbf{1}\{v_i\ge b_t+\epsilon\}$. Thus,
\[P(X_i=1\mid b_t)=P(v_i>b_t+\epsilon)=\ol{F}(b_t+\epsilon), \quad \text{for any } i=1,2,\dots,N.\]
Thus, conditional on $b_t$, the $X_i$'s are independent and identically distributed (iid) with distribution (denoted by) $X\mid b_t \sim \text{Bernoulli}(p=\ol{F}(b_t+\epsilon))$, so that $\mathbb E[X\mid b_t]=\ol{F}(b_t+\epsilon)$. Since the block capacity is upper bounded by $T$, the transactions $g_t\mid b_t$ that will get ultimately included in block $B_t$ satisfy the equality $g_t\mid b_t=\min{\Big\{T,\sum_{i=1}^{N_t} X_i\Big\}}\;\Big |\; b_t$. Putting these together, we can now upper bound $\mathbb{E}[g_t\mid b_t]$ as follows
\begin{align}\label{eq:min}
\mathbb{E}[g_t\mid b_t]&=\mathbb{E}\Big[ \min{\Big\{T,\sum_{i=1}^{N_t} X_i\Big\}}\;\Big |\; b_t\Big]\le \min{\Big\{T,\mathbb{E}\Big[\sum_{i=1}^{N_t} X_i\;\Big |\; b_t\Big]\Big\}}\nonumber\\&=
\min{\left\{T,\mathbb{E}[N_t]\mathbb{E}[X\mid b_t]\right\}}=
\min{\{T,\lambda T\ol{F}(b_t+\epsilon)\}},
\end{align}
where the inequality is due to the interchange of the minimum with the expectation and the (second to last) equality due to Wald's equation since the random variables $X_i, i=1,\dots, N_t$ are independent of $N_t$. Plugging \eqref{eq:min} in the expression for $\mathbb{E}[b_{t+1} \mid b_t]$ above, yields
\[\mathbb E[b_{t+1}\mid b_t]\le b_t+b_td\min{\{1, 2\lambda \ol{F}(b_t+\epsilon)-1\}},\]
which is the the inequality in equation \eqref{eq:expectation} as claimed.
\end{proof}

Next, we show that the dynamical system $\{b_t\}_{t\ge0}$ in \eqref{eq:fluid} has a unique fixed point which is \emph{directionally stable}. Before proceeding with the formal statement and its proof, we first define the relevant terms that we will in the subsequent theoretical analysis.

\begin{definition}[Discrete Time Dynamical System]
A one-dimensi\-onal \emph{discrete time dynamical system}, $\{b_t\}_{t\in \mathbb{N}}$, is determined by an update rule $g:\mathbb{R}\to \mathbb{R}$, so that $b_{t+1}:=g(b_t)$. We will write
\[g^t(b_0):=\underbrace{g\circ g\circ \dots \circ g}_{t-\text{times}}(x),\]
to denote the $t$-th iteration of the system, i.e., the $t$-times composition of $g$ with itself (when $t=1$, we will simply write $g$ instead of $g^1$). Accordingly, a sequence $(g^t(b_0))_{t\in\mathbb{N}}$ is a called a \emph{trajectory or orbit} of the dynamics with $b_0$ as a starting point. A point $b^*$ is called a \emph{fixed point} of the dynamics if $g(b^*)=b^*$. A common technique to show that a dynamical system
converges to a fixed point is to construct a function $\Phi: \mathbb{R}\to\mathbb{R}$ such that $\Phi(g(b) < \Phi(b)$ for any $b\in \mathbb R$ unless $b$ is a fixed point of $g$. We call $\Phi$ a \emph{Lyapunov or potential} function for $g$.
\end{definition}

\begin{definition}[Directionally Stable Fixed Point]\label{def:directional}
Let $\{b_t\}_{t\ge0}$ be a one-dimensional dynamical system determined by a function $g:\mathbb R\to\mathbb R$ and let $b^*$ be fixed point of $g$, i.e., $g(b^*)=b^*$. Then, $b^*$ is called \emph{directionally stable} for $\{b_t\}_{t\ge0}$ if for every $t\ge0$ such that $b_t\neq b^*$ it holds that $(g(b_t)-b_t)/(b_t-b^*)<0$ where $g(b_t)=b_{t+1}$ for every $t\ge0$.
\end{definition}

In other words, if a fixed point is directionally stable for a dynamical system, then the dynamical system moves to the direction of that point at every iteration. To proceed, with the formal statement that $b^*$ is directionally stable for the (non-atomic) base fee dynamics $\{b_t\}_{t\ge0}$, let $F^{-1}(p):=\inf\{x\in \mathbb {R} :F(x)\geq p\}$ denote the \emph{inverse distribution function} of $F$. Since $F$ is continuous and strictly increasing by assumption, for any $p\in [0,1]$ there exists a unique $x\in \mathbb R$ such that $F^{-1}(p)=x$. Moreover, under these conditions, $F^{-1}$ is also strictly increasing. Using this notation, we can prove \Cref{lem:fixed}.

\begin{lemma}\label{lem:fixed}
Consider the deterministic dynamical system $\{b_t\}_{t\ge0}$ with
\[b_{t+1}=b_t+b_td\min{\{1,2\lambda\ol{F}(b_t+\epsilon)-1\}}.\]
Then, $b_t$ has a unique stationary point given by
\begin{equation}\label{eq:fixed}
b^*=F^{-1}(1-1/2\lambda)-\epsilon.
\end{equation}
Moreover, $b^*$ is directionally stable for any initial condition $b_0>0$ and the dynamics $\{b_t\}_{t\ge 0}$ converge to a globally attracting $db^*$-neighborhood of $b^*$, i.e., there exists a $\bar{t}\in \mathbb N$, so that $b_t\in [(1-d)b^*,(1+d)b^*]$ for any $t>\bar{t}$.
\end{lemma}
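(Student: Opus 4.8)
The plan is to establish the three assertions of \Cref{lem:fixed} in sequence, deferring the only delicate point to the end. Throughout write $g(b):=b+bd\min\{1,2\lambda\ol{F}(b+\epsilon)-1\}$ for the update map of \eqref{eq:fluid}, and note that, since $\ol{F}\ge 0$ forces the bracket $m(b):=\min\{1,2\lambda\ol{F}(b+\epsilon)-1\}$ to lie in $[-1,1]$, we have $g(b)=b\,(1+d\,m(b))\in[(1-d)b,(1+d)b]$; in particular $g$ maps $(0,\infty)$ into itself in the relevant range $d\in(0,1)$. For the fixed point, since $b>0$ and $d>0$ the equation $g(b)=b$ is equivalent to $m(b)=0$, and as $0<1$ this is in turn equivalent to $2\lambda\ol{F}(b+\epsilon)-1=0$, i.e.\ $F(b+\epsilon)=1-\tfrac1{2\lambda}$; because $F$ is continuous and strictly increasing, $F^{-1}(1-\tfrac1{2\lambda})$ is the unique solution, which gives \eqref{eq:fixed}. (I take for granted the implicit assumption that the support of $F$ is large enough relative to $\epsilon$ that $b^*>0$; otherwise the claimed neighborhood is vacuous.)

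For directional stability I would use that $\ol{F}$ is strictly decreasing. If $0<b_t<b^*$ then $b_t+\epsilon<b^*+\epsilon$, hence $\ol{F}(b_t+\epsilon)>\ol{F}(b^*+\epsilon)=\tfrac1{2\lambda}$, so $m(b_t)>0$ and $g(b_t)>b_t$; symmetrically $b_t>b^*$ gives $\ol{F}(b_t+\epsilon)<\tfrac1{2\lambda}$, $m(b_t)<0$, and $g(b_t)<b_t$. Thus $g(b_t)-b_t$ and $b_t-b^*$ always carry opposite signs, which is exactly the condition of \Cref{def:directional}.

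The main work is convergence to $I:=[(1-d)b^*,(1+d)b^*]$. I would first combine the envelope $g(b)\in[(1-d)b,(1+d)b]$ with directional stability to get sharper one-step bounds: $b_t>b^*\Rightarrow (1-d)b^*<g(b_t)<b_t$, and $0<b_t<b^*\Rightarrow b_t<g(b_t)<(1+d)b^*$. These immediately yield forward invariance of $I$ (split $b_t\in I$ into $b_t\in[(1-d)b^*,b^*)$, $b_t=b^*$, $b_t\in(b^*,(1+d)b^*]$). To reach $I$ in finitely many steps, split on how the orbit sits relative to $b^*$: (i) if $b_t<b^*$ for all $t$, the orbit is strictly increasing and bounded, so it converges to a limit $L\ge b_0>0$, necessarily a fixed point of $g$ by continuity, hence $L=b^*$, so eventually $b_t\in((1-d)b^*,b^*)\subset I$; (ii) symmetric if $b_t>b^*$ for all $t$; (iii) if $b_t=b^*$ for some $t$ the orbit is constant afterwards; (iv) otherwise $\mathrm{sign}(b_t-b^*)$ flips at some step $t$, and the one-step bounds put $b_{t+1}\in I$, after which forward invariance keeps the orbit there. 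Taking $\bar t$ to be the resulting entry time finishes the proof.

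The one subtlety — and the step I expect to require the most care — is that directional stability by itself does \emph{not} give anything stronger than neighborhood convergence: the orbit may keep oscillating around $b^*$ inside $I$ forever (this is precisely the regime the later bifurcation/chaos results exploit). So the argument must lean only on (a) monotonicity of the orbit while it stays on one side of $b^*$, (b) the ``overshoot'' bound that a single step from the far side of $b^*$ lands within a multiplicative factor $d$ of $b^*$, and (c) forward invariance of $I$; and the continuity/limit argument should be invoked only in the genuinely monotone cases (i)--(ii), where the orbit is automatically bounded and its limit is forced to be the unique positive fixed point.
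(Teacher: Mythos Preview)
Your proof is correct and follows essentially the same route as the paper's: solve $m(b)=0$ for the unique fixed point, establish directional stability from the strict monotonicity of $F$, and use the envelope $g(b)\in[(1-d)b,(1+d)b]$ to bound any overshoot of $b^*$ inside $I$. Your treatment is in fact slightly more thorough than the paper's, which handles the case where the orbit stays on one side of $b^*$ only via the parenthetical ``by directional stability $b_{t+1}$ will be closer to $b^*$ and the claim follows''; your explicit monotone-convergence-to-a-fixed-point argument in cases (i)--(ii) fills that gap cleanly.
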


\begin{proof}[Proof of \Cref{lem:fixed}]
Let $r_t$ denote the rate of change of $b_t$, i.e.,
\[r_t:=d\min{\{1,2\lambda\ol{F}(b_t+\epsilon)-1\}}.\]
By definition, $r_t\in[-d,d]$. The process $\{b_t\}_{t\ge0}$ becomes stationary if only if $r_t$ becomes equal to $0$. Solving the equation $r^*=0$ for $b^*$ under the assumption that $F$ is continuous and increasing (and hence invertible and with an increasing inverse, $F^{-1}$) yields the unique solution
\[b^*=F^{-1}(1-1/2\lambda)-\epsilon,\]
which is the only equilibrium candidate for the deterministic dynamical system $\{b_t\}_{t\ge0}$. Note that at $b^*$, it holds that $1/2 =\lambda\ol{F}(b^*+\epsilon)$, and hence
\begin{equation}\label{eq:ast}
\min{\{1,2\lambda\ol{F}(b^*+\epsilon)-1\}}=2\lambda\ol{F}(b^*+\epsilon)-1=0. \tag{$\ast$}
\end{equation}
To see that the the point $b^*$ is directionally stable for $\{b_t\}_{t\ge0}$, we proceed with a case discrimination on the sign of $b_t, t\ge0$. Since the dynamical system is one-dimensional, this follows from a sign analysis of $r_t$.
\begin{itemize}[leftmargin=*, itemindent=0cm]
\item[$\bullet$] $b_t<b^*$. Since $F$ is strictly increasing, it holds that $F(b_t+\epsilon)<F(b^*+\epsilon)$ for any $b_t<b^*$. Hence,
\begin{align*}
r_t&=d\min{\{1,2\lambda\ol{F}(b_t+\epsilon)-1\}}>d\min{\{1,2\lambda\ol{F}(b^*+\epsilon)-1\}}\overset{\eqref{eq:ast}}{=}0,
\end{align*}
by definition of $b^*$. Hence, $r_t>0$, whenever $b_t>b^*$.
\item[$\bullet$] $b_t>b^*$. Similarly, whenever $b_t>b^*$, it will be the case that $F(b_t+\epsilon)>F(b^*+\epsilon)$. Hence,
\begin{align*}
r_t&=d\min{\{1,2\lambda\ol{F}(b_t+\epsilon)-1\}}<d\min{\{1,2\lambda\ol{F}(b^*+\epsilon)-1\}}\overset{\eqref{eq:ast}}{=}0,
\end{align*}
where the first equality in the last line follows from the observation that $\lambda(1-F(b^*+\epsilon))<T$ by definition of $b^*$.
\end{itemize}
Thus, it remains to show that $b_t$ can only have bounded oscillations in a $db^*$ neighborhood around $b^*$, i.e., that the interval $[(1-d)b^*,(1+d)b^*]$ is globally attracting for the dynamics $\{b_t\}_{t\ge0}$. Assume that for some $t>0$, $b_t>b^*$ and $b_{t+1}<b^*$ (if $b_{t+1}>b^*$, then by the definition of directional stability, $b_{t+1}$ will be closer to $b^*$ than $b_{t}$ and the claim follows). Then, it must be the case that,
\[b_{t+1}=b_t(1+r_t)>b_t(1-d)>b^*(1-d),\]
since $r_t>-d$ by definition. Similarly, if $b_t<b^*$ and $b_{t+1}>b^*$ for some $t>0$, then it holds that
\[b_{t+1}=b_t(1+r_t)<b_t(1+d)<b^*(1+d),\]
since $r_t<d$ by definition. Thus, if $|b_{\bar{t}}-b^*|<db^*$ for some $\bar{t}>0$, it must be that $|b_t-b^*|<db^*$ for any $t>\bar{t}$. This implies that there can only be bounded oscillations around $b^*$ within the $[(1-d)b^*,(1+d)b^*]$ intervals as claimed.
\end{proof}

The next natural step is to determine conditions under which the base fee converges to this candidate equilibrium or conditions under which it does not. It is important to understand that even if the base fee remains in the bounded region specified in \Cref{lem:fixed}, it may oscillate there indefinitely (jumping from above to below the equilibrium value and vice versa) causing significant fluctuations in the block load even for stationary demand. Such an instance is given in \Cref{ex:edge}.

\begin{example}\label{ex:edge}
Let $T=1000$, and assume a fixed number of $\lambda T=3000$ arriving transactions per block with equally spaced valuations in $[200,230]$ (i.e., the valuations are not drawn from a uniform distribution but are assumed to be deterministic and linearly spaced in this case). Then, starting from $b_0=100$, the process $\{b_t\}_{t\ge0}$ has the form that is shown in \Cref{fig:edge}.
\begin{figure}[!htb]
\centering
\includegraphics[width=0.48\linewidth]{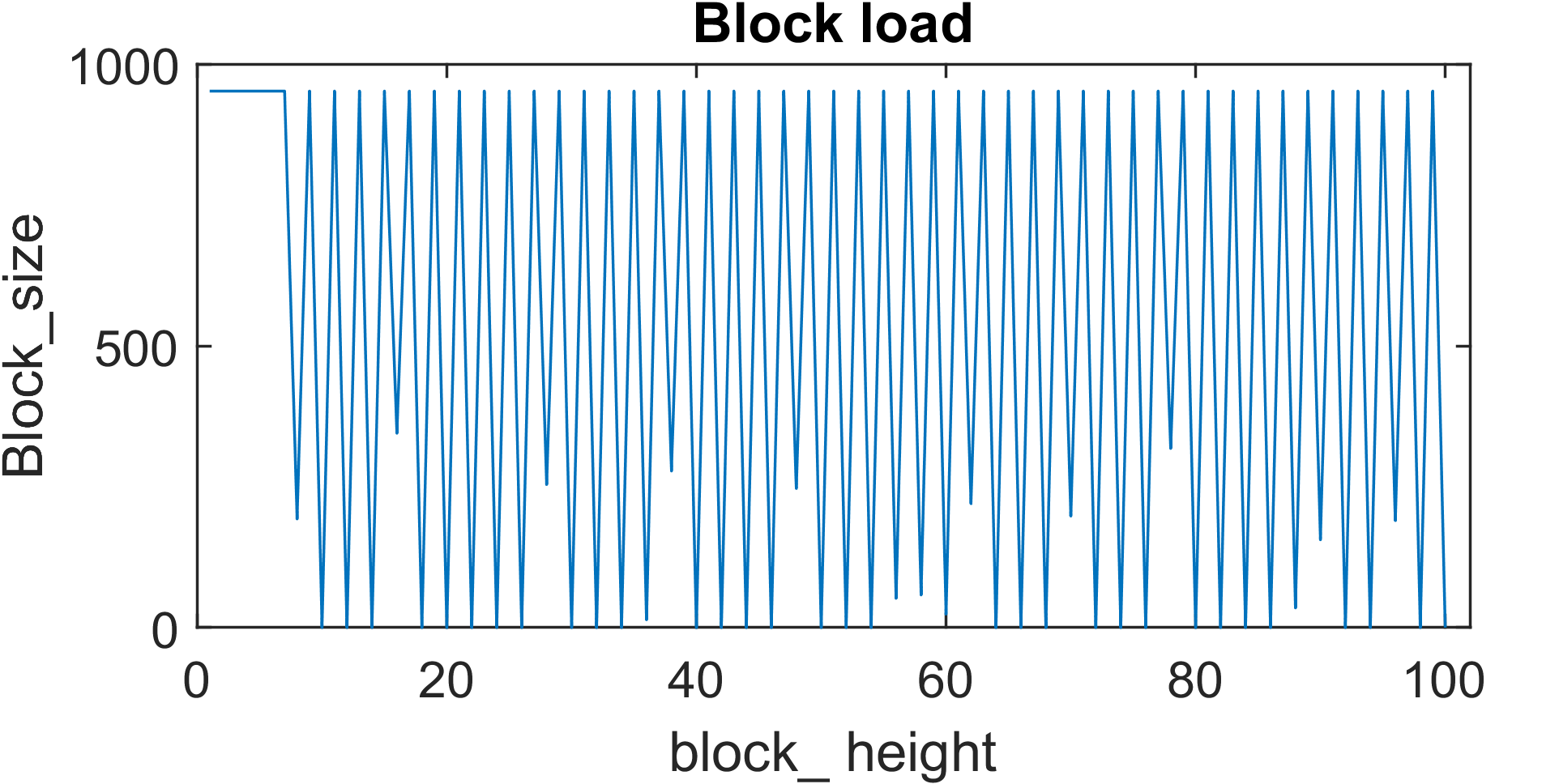}\hfill
\includegraphics[width=0.48\linewidth]{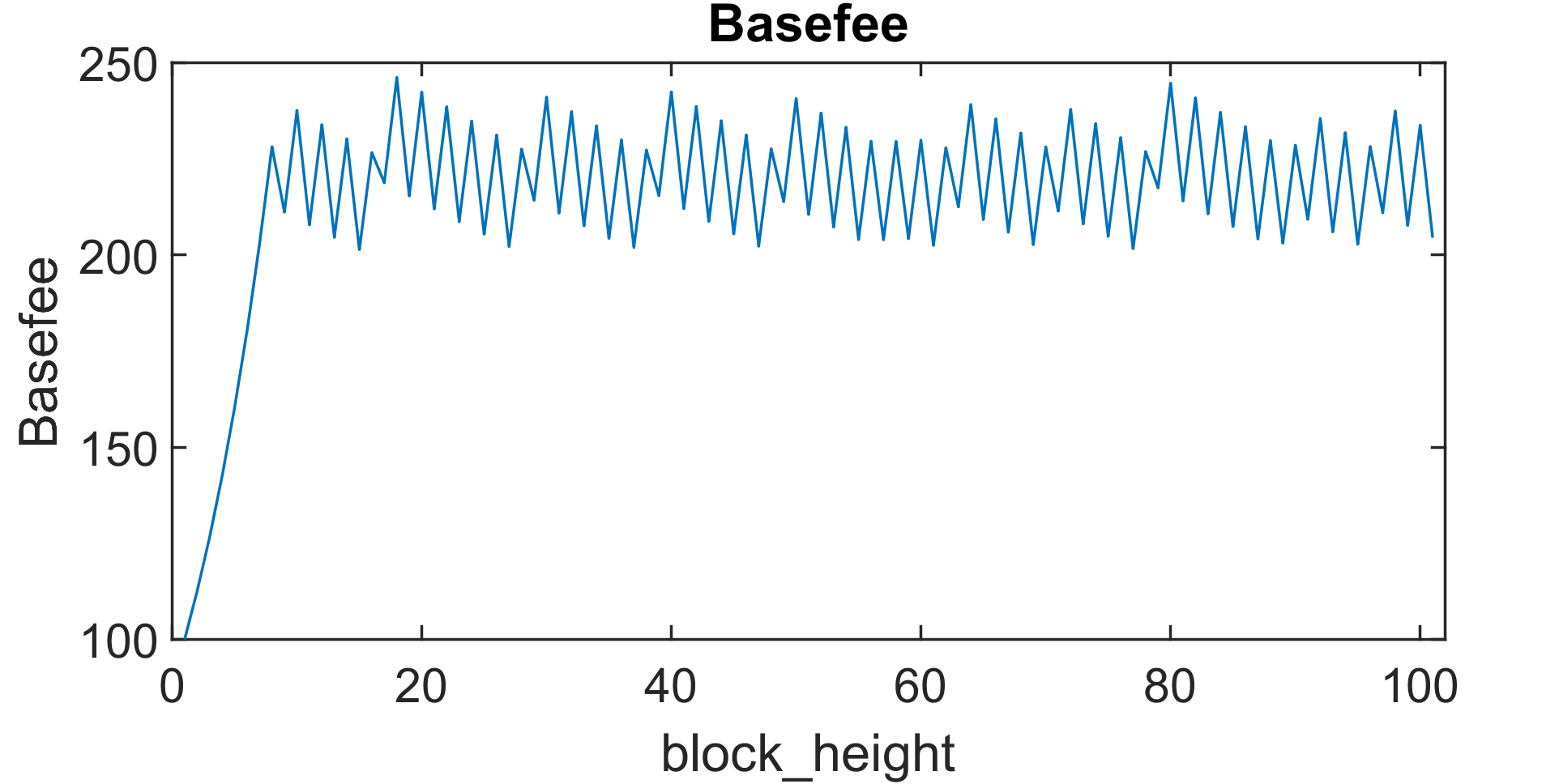}
\caption{A case with stationary demand in which the base fee, $b_t$, oscillates perpetually around the equilibrium value $b^*$ (right panel). Despite the bounded oscillations in $b_t$, the block load bounces between its extremes (full to empty and vise versa) (left panel).}
\label{fig:edge}
\end{figure}
While the base fee converges to the bounded region $[(1-d)b^*,(1+d)b^*]$ as predicted by \Cref{lem:fixed}, the block load bounces between its extremes, i.e., from full to empty (and vise versa). Intuitively, instabilities emerge as the number of arriving transactions with similar valuations increases. If valuations had significant differences, then the base fee would reach a level where only the desired $T/2$ would not be priced out. However, if valuations of users are similar, then the base fee prices out (approximately) all or (approximately) none of the users at the same time. This leads to chaotic updates of the base fee (still within the bounded region $[(1-d)b^*,(1+d)b^*]$) and as it turns out, to extreme (and undesired) oscillations in the block occupancy.
\end{example}

Our goal in the subsequent analysis is to formalize the observation in \Cref{ex:edge} and determine parameter regions for $\lambda$ and $w$ such that the base fee is provably convergent, oscillating or chaotic, leading to (approximately) stable block loads in the former case or significant fluctuations in the other cases.

\subsection{Convergence to Equilibrium}\label{sub:lower}

For lower step-sizes, we can prove convergence of the base fee dynamics to $b^*$. Here, we provide a closed form expression for the threshold under which convergence provably occurs. We remind that in the \textit{non-atomic model } the base fee $b_t$ is determined by the following dynamics
\[b_{t+1}=b_t\left [1+d\min{\{1,2\lambda\ol{F}(b_t+\epsilon)-1\}}\right].\]
Since the miners' premium, $\epsilon$ appears only in the argument of the cumulative distribution function, $F$, we will eliminate it from the following computations without loss of generality (e.g., by appropriately shifting the support of $F$). For simplicity, we assume that $\lambda=1$ so that $\min{\{1,2\lambda \ol{F}(b_t)-1\}}=2\ol{F}(b_t)-1$ for all $b_t>0$. Under these assumptions, $b^*$ simplifies to $b^*=F^{-1}(1/2)$, i.e., it is the median of the distribution $F$. Using the above, we can now formulate the following convergence threshold for the step-size (which holds for arbitrary distributions).


\begin{theorem}\label{t:threshold_convergence}
Let $b_{t+1}=b_t[1+d(2\ol{F}(b_t)-1)], t\ge0$ denote the non-atomic base fee dynamics when $\lambda=1$. Then, for any initial value $b_0>0$, and any continuous and strictly increasing distribution function, $F$, with support on $[L,U]$ with $0<L<U$, ${b_t}_{t\ge0}$ converges to $b^*=F^{-1}(1/2)$, for any step-size $d\in(0,d_F]$, where
\[d_F=\inf_{b\neq b^*}{\frac{(b^*/b)^2-1}{1-2F(b)}}\,.\]
\end{theorem}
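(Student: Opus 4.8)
The plan is to exhibit an explicit Lyapunov (potential) function and show it strictly decreases along every non-stationary trajectory whenever $d \le d_F$. The natural candidate is $\Phi(b) = (\ln b - \ln b^*)^2$, or equivalently working with the multiplicative deviation $\rho_t := b_t/b^*$ and tracking $|\ln \rho_t|$. The reason this is the right object is that the update $b_{t+1} = b_t(1 + d(2\ol{F}(b_t)-1))$ is multiplicative: writing $r(b) := d(2\ol{F}(b)-1) \in [-d,d]$, we have $\ln b_{t+1} - \ln b^* = (\ln b_t - \ln b^*) + \ln(1 + r(b_t))$, and directional stability (\Cref{lem:fixed}) already tells us $r(b_t)$ has the opposite sign to $\ln b_t - \ln b^*$. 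So each step moves $\ln b_t$ toward $\ln b^*$; what must be controlled is \emph{overshoot}, i.e. that the step does not jump past $b^*$ by more than it started away from it.

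\textbf{Key steps, in order.} First, reduce convergence to the one-step contraction inequality $\Phi(b_{t+1}) < \Phi(b_t)$ for all $b_t \ne b^*$; combined with $\Phi \ge 0$, continuity, and the fact that the only fixed point is $b^*$, a standard Lyapunov argument (monotone bounded sequence, limit must be a fixed point) gives $b_t \to b^*$. Second, rewrite $\Phi(b_{t+1}) < \Phi(b_t)$ as $|\ln b_{t+1} - \ln b^*| < |\ln b_t - \ln b^*|$, and since $\ln(1+r(b))$ and $\ln b - \ln b^*$ have opposite signs, this is equivalent to $|\ln(1+r(b))| < 2|\ln b - \ln b^*|$ — "don't overshoot by more than you came in." Third — and this is where the specific form of $d_F$ enters — bound the overshoot. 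Rather than work with logarithms directly (which would give a transcendental condition), I expect the authors use the cleaner multiplicative/quadratic comparison: the worst case is when $b_t$ is on one side of $b^*$ and $b_{t+1}$ lands on the other; then one needs $b_{t+1} \ge (b^*)^2/b_t$ when $b_t > b^*$ (and the mirror inequality when $b_t < b^*$), i.e. $b_t(1 + d(2\ol{F}(b_t)-1)) \ge (b^*)^2/b_t$, which rearranges to $d(1 - 2F(b_t)) \le (b^*/b_t)^2 - 1$ — wait, note $2\ol F(b_t) - 1 = 1 - 2F(b_t)$. For $b_t > b^*$ we have $1 - 2F(b_t) < 0$, so dividing flips the inequality: $d \ge \frac{(b^*/b_t)^2 - 1}{1 - 2F(b_t)}$, and since the numerator is also negative this is a genuine constraint; taking the infimum over all $b \ne b^*$ of $\frac{(b^*/b)^2-1}{1-2F(b)}$ (which is nonnegative since numerator and denominator always share a sign) yields exactly $d_F$, and $d \le d_F$ is what guarantees no overshoot past the reflected point on \emph{both} sides simultaneously.

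\textbf{Remaining details and the main obstacle.} One must check that the quantity $\frac{(b^*/b)^2-1}{1-2F(b)}$ is well-defined and positive: both numerator and denominator vanish at $b = b^*$, so near $b^*$ one invokes continuity/L'Hôpital (the limit is finite and positive provided $F$ is differentiable there, or one argues via monotonicity that the ratio stays bounded away from $0$), and since $F$ has support $[L,U]$, the infimum is attained or approached on a compact set, so $d_F > 0$ — this is what makes the theorem nonvacuous. The comparison "reflect across $b^*$ and don't overshoot that reflected point" must be justified as the correct sufficient condition: if $b_{t+1}$ stays on the same side as $b_t$, directional stability already gives strict decrease of $\Phi$; if it crosses, landing at or beyond $(b^*)^2/b_t$ on the far side is exactly the break-even for $\Phi$ in the log-coordinate, because $\ln\big((b^*)^2/b_t\big) - \ln b^* = -(\ln b_t - \ln b^*)$. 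The main obstacle, I expect, is handling the crossing case cleanly and confirming that the one-step bound with the infimum-over-all-$b$ definition of $d_F$ genuinely suffices for \emph{every} $b_t$ along the trajectory (not just the first step) — but since $\Phi$ is nonincreasing the orbit stays in a bounded region, so the same uniform bound applies at every step, closing the induction.
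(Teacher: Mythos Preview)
Your approach is essentially identical to the paper's: the same Lyapunov function $\Phi(b)=(\ln b-\ln b^*)^2$, and your ``don't overshoot the reflected point $(b^*)^2/b$'' criterion is exactly the paper's factorization $(\ln g(b)-\ln b^*)^2-(\ln b-\ln b^*)^2=\ln\!\big(g(b)/b\big)\cdot\ln\!\big(bg(b)/(b^*)^2\big)$, with the sign of the second factor controlled by the condition $d\le d_F$. Your intermediate inequality manipulation has a sign slip (from $b_{t+1}\ge (b^*)^2/b_t$ you should get $d(1-2F(b_t))\ge (b^*/b_t)^2-1$, hence $d\le\tfrac{(b^*/b_t)^2-1}{1-2F(b_t)}$ after dividing by the negative quantity), but you land on the correct final condition, so the argument goes through.
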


\begin{proof}
We rewrite the base fee dynamics as \[b_{t+1}=b_t\left[1+d(1-2F(b_t)\right]\] and define the function $g:\mathbb R_+\to\mathbb{R_+}$ by $g(b):=b(1+d-2dF(b))$, for any $b>0$. We will prove that
\[(\ln g(b) - \ln b^\ast)^2 - (\ln b - \ln b^\ast)^2 < 0,\quad \text{for any } b\neq b^*.\]
Once this is established, the convergence result easily follows since $(\ln b - \ln b^\ast)^2$ acts as a potential function for the dynamics. To proceed, we rewrite the left hand side of the above inequality as
\begin{align*}
(\ln g(b) &- \ln b^\ast)^2 - (\ln b - \ln b^\ast)^2 \\&= (\ln g(b) - \ln b) \cdot (\ln g(b) + \ln b  - 2\ln b^\ast)\\
&= \ln\left(\frac{g(b)}{b} \right) \ln \left( \frac{bg(b)}{(b^\ast)^2} \right)\\
&=\ln\left[ 1 + d - 2dF(b) \right]
\cdot \ln \left[ (b/b^\ast)^2 \cdot(1 + d - 2dF(b)) \right].
\end{align*}
Since $F(b)$ is a continuous and increasing function by assumption, there are two cases:
\begin{itemize}[leftmargin=*]
\item $b<b^*$: in this case, it holds that $F(b)<F(b^*)=1/2$ which implies that $1+d-2dF(b)>1+d-2d/2=1$. Hence, $\ln[1+d-2dF(b)]>0$. Thus, to obtain the desired inequality, we need to select $d>0$ so that the term in the argument of the second $\ln$ on the right hand side of the above equation is less than $1$, i.e., $(b/b^*)^2(1+d-2dF(b))<1$. Solving for $d$ yields the inequality $d\le \frac{(b^*/b)^2-1}{1-2F(b)}$. Since this inequality must hold for any $b<b^*$, we obtain the threshold
\[d\le \inf_{b< b^*}\frac{(b^*/b)^2-1}{1-2F(b)}.\]
\item $b>b^*$: in this case, it holds that $F(b)>F(b^*)=1/2$, which implies that $\ln[1+d-2dF(b)]<0$. Thus, by a similar argument as above, we need to select $d>0$ so that the term in the argument of the second $\ln$ on the right hand side of the above equation is larger than $1$, i.e.,
\[(b/b^*)^2(1+d-2dF(b))>1.\] Solving for $d$ yields the same inequality as above (note that now $1-2F(b)<0$).\end{itemize}
Putting the two cases together, we have that the base fee dynamics converge to $b^*$ whenever $0<d\le d_f$, with $d_F=\inf_{b< b^*}\frac{(b^*/b)^2-1}{1-2F(b)}$ as claimed.
\end{proof}

We illustrate the result of \Cref{t:threshold_convergence} with an example.
\begin{example}
Consider the uniform distribution in $[0,1]$ with $F(b)=b$ for $b\in[0,1]$, $F(b)=0$ for $b<0$ and $F(b)=1$ for $b>1$. Then, $b^*=1/2$ and $d_F$ is given by $d_F=\inf_{b\neq1/2\in[0,1]}\frac{1/4b^2-1}{1-2b}$. The minimum is obtained for $b=1$ which yields the value $d_F=3/4$. This means that in this case, the dynamics converge for any $d<3/4$.\par
To see the effect of the concentration of valuations in $d_F$, consider the parametric case with $F\sim \text{Uniform}[L,U]$ with $[L,U]=[1-w/2,1+w/2]$ for some $w>0$ so that $1-w/2>0$. Then, $F(b)=(b-(1-w/2))/w$ for $b\in[1-w/2,1+w/2]$, $F(b)=0$ for $b<1-w/2$ and $F(b)=1$ for $b>1+w/2$. In this case, $b^*=1$ and $d_F$ is the solution of the optimization problem
\[d_F=\inf_{d\neq b^*\in[L,U]}\frac{(1+b)w}{2b^2},\]
which is obtained from \Cref{t:threshold_convergence} after some trivial algebra. This is decreasing in $b$ which implies that the minimum is always attained at the upper bound of the support, $b=1+w/2$, yielding the solution $d_F=\frac{w(4+w)}{(2+w)^2}$. Thus, $d_F$ is increasing in $w$ which implies that convergence is easier (harder) as valuations become less (more) concentrated in a specific regime.
\end{example}
The last example suggests that for any $d>0$, there exists a $w>0$ (small enough) such that the base fee dynamics will \emph{not} converge to $b^*$ if the valuations are uniformly distributed on an interval with range $w$. This raises the question of what happens in the base fee dynamics in such cases. As we show next, for certain values of $w$, the dynamics not only fail to converge, but they become provably chaotic.

\subsection{Instability and Chaos}\label{sub:chaos}

The previous convergence results critically depend on the provided thresholds. If the step-size exceeds these bounds, then the base fee adjustment rule may lead to chaotic updates. As mentioned above, these bounds depend on the number of arrivals, $\lambda$, and in the range of valuations, $w$. If $\lambda$ increases or $w$ decreases, i.e., if the system becomes more congested or if the valuations become more concentrated around a specific value, then the thresholds go down and a given step-size may not be enough to guarantee convergence. In fact, as we will show, for any step-size, there exists a (reasonably large) $\lambda$ and a (reasonably small) $w$ so that the dynamics become chaotic.


\subsubsection{Dynamical Systems and Li-Yorke Chaos}

Formally, we will show that the base fee updates become chaotic in the sense of Li-Yorke \cite{Li75}. If a system is Li-Yorke chaotic, then its trajectories exhibit complex behavior: uncountably many pairs of trajectories get arbitrary close and move apart infinitely many times as the system evolves. Furthermore, the system has periodic orbits of all possible periods.
This means that different trajectories become indistinguishable and hence, the system cannot be efficiently simulated or cannot be predicted in practice. The notion of Li-Yorke chaos is a fundamental notion of chaos in dynamical systems that is  connected to many other definitions of chaos (e.g., positive topological entropy). For more discussion on these connections, particularly in the case of game dynamics see~\cite{Cho21}. Such chaotic behavior has recently been observed in game theoretic settings under adaptive agents using different online learning dynamics~\cite{Pal17,Cho21,chotibut20family,bielawski2021followtheregularizedleader}. To give the formal definition of Li-Yorke chaos (cf. \Cref{def:liyorke}), we will first introduce some additional notation.

\begin{definition}[Periodic Orbits and Points]
A sequence $b_1,b_2,$ $\dots, b_k$ is called a \emph{periodic orbit} of length $k$ if $b_{t+1}=g(b_t)$ for $1\le i\le k-1$ and $g(b_k)=b_1$. Each point $b_1,b_2,\dots,b_k$ is called \emph{periodic point of period $k$}.
\end{definition}

\begin{definition}[Li-Yorke pair \cite{Li75}]
Let $X=[L,U]$ be a compact interval in $\mathbb R$ and let $g:X\to X$ define a discrete time dynamical system $(x_t)_{t\in \mathbb N}$ on $X$, so that $x_t:=g^{t}(x_0)$ for any $x_0\in X$. A pair $(x,y)\in X$ with $x\neq y$ is called a \emph{Li-Yorke pair} if
\[\liminf_{t\to\infty}|g^{t}(x)-g^t(y)|=0<\limsup_{t\to\infty}|g^t(x)-g^t(y)|.\]
If for any $x,y \in S$ with $x\neq$, the pair of $x,y$ is a Li-Yorke pair, then $S$ is called a scrambled set.
\end{definition}

The most classic definition of chaos in the mathematics literature defines chaos as the existence of periodic orbits of all possible periods along with an uncountably large scrambled set.

\begin{definition}[Li-Yorke chaos \cite{Li75}]\label{def:liyorke}
Let $X=[L,U]$ be a compact interval in $\mathbb R$ and let $g:X\to X$ define a discrete time dynamical system $(x_t)_{t\in \mathbb N}$ on $X$, so that $x_t:=g^{t}(x_0)$ for any $x_0\in X$.

The dynamical system $(x_t)_{t\in \mathbb N}$ is called \emph{Li-Yorke chaotic} if
it holds that:
\begin{enumerate}[leftmargin=*]
    \item For every $k=1,2,\dots$ there is a periodic point in $X$ with period $k$.
    \item There is an uncountable set $S\subseteq X$ (containing no periodic points), which satisfies the following conditions:
    \begin{itemize}
        \item For every $x\neq y \in S$,
        \[\limsup_{t\to\infty}|g^n(x)-g^n(y)|>0 \,\,\text{and}\,\,\liminf_{t\to\infty}|g^n(x)-g^n(y)|=0.\]
        \item For every point $x\in S$ and point $y\in X$,
        \[\limsup_{t\to\infty}|g^n(x)-g^n(y)|>0.\]
    \end{itemize}
\end{enumerate}
In particular $S$ is a scrambled set.

\end{definition}

According to \cite{Li75}, a sufficient condition for a system to be Li-Yorke chaotic is that it has a periodic orbits of period $3$.
 This will be our main tool to show that the base fee dynamics are Li-Yorke chaotic and is stated next.\par

\begin{theorem}[Period three implies chaos \cite{Li75}]\label{thm:liyorke}
Let $X\subset \mathbb R$ be a compact interval and let $g:X\to X$ be a continuous function. Further assume that there exists a point $x_0\in X$ for which the points $x_1:=g(x_0), x_2:=g(x_1)=g^{2}(x_0)$ and $x_3:=g(x_2)=g^{3}(x_0)$ satisfy
\[x_3\le x_0<x_1<x_2 \quad (\text{or }\, x_3\ge x_0 > x_1 >x_2).\]
Then, the system is Li-Yorke chaotic.

\end{theorem}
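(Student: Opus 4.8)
The plan is to reconstruct the classical interval--covering argument of Li and Yorke. Up to reflecting through the origin (which conjugates $g$ to a map of the same type), assume the first alternative holds, so $x_3\le x_0<x_1<x_2$ in $X$. Put $K_0:=[x_0,x_1]$ and $K_1:=[x_1,x_2]$; these are compact subintervals of $X$ with disjoint interiors. The first step is to record the ``covering relations'' $K_1\to K_0$, $K_1\to K_1$ and $K_0\to K_1$, where $I\to J$ abbreviates $g(I)\supseteq J$: since the continuous image of an interval is an interval, $g(K_1)\supseteq[g(x_2),g(x_1)]=[x_3,x_2]\supseteq K_0\cup K_1$ and $g(K_0)\supseteq[g(x_0),g(x_1)]=[x_1,x_2]=K_1$. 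I would then invoke the standard covering lemma, itself a consequence of the intermediate value theorem: if $I,J$ are compact intervals with $J\subseteq g(I)$, there is a compact $I'\subseteq I$ with $g(I')=J$. Iterating it, any finite chain $I_0\to I_1\to\cdots\to I_n$ of covering relations is realized by a compact subinterval $Q\subseteq I_0$ with $g^j(Q)\subseteq I_j$ for $j<n$ and $g^n(Q)=I_n$; in particular, when $I_n=I_0$ the map $g^n$ has a fixed point in $Q$.

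Second, I would extract periodic orbits of every period $k\ge1$, which is clause~(1) of \Cref{def:liyorke}. For $k=1$, $K_1\subseteq g(K_1)$ gives a fixed point in $K_1$. For $k\ge2$, apply the iterated covering lemma to the length-$k$ cyclic chain whose successive intervals are $K_1,K_1,\dots,K_1,K_0$ ($k-1$ copies of $K_1$, then one $K_0$, then back to $K_1$): this yields $p$ with $g^k(p)=p$, $g^{k-1}(p)\in K_0$ and $g^j(p)\in K_1$ for $0\le j\le k-2$. That the minimal period of $p$ equals $k$, not a proper divisor, follows from the disjointness of the interiors of $K_0,K_1$ together with the strict inequality $x_3<x_1$ (which forbids the orbit from resting on the shared endpoint $x_1$). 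In particular this already produces a genuine orbit of period $3$.

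Third --- the substantive step --- I would construct the uncountable scrambled set $S$, which is clause~(2) of \Cref{def:liyorke}. Following Li and Yorke, one encodes an uncountable parameter family into a family of \emph{admissible} itineraries over the alphabet $\{K_0,K_1\}$, ``admissible'' meaning no two consecutive $K_0$'s --- exactly the constraint the three covering relations permit. For a parameter $\alpha$ the itinerary $\tau^\alpha$ alternates increasingly long runs inside $K_1$ with single-step excursions through $K_0$, the run-length pattern chosen so that: (i) $\tau^\alpha$ is not eventually periodic; (ii) for $\alpha\ne\beta$ the itineraries $\tau^\alpha,\tau^\beta$ disagree infinitely often, and --- using that the interiors of $K_0,K_1$ are disjoint and that $g$ carries $K_1$ as low as $x_3\le x_0$ --- this forces the corresponding orbits a positive distance apart along infinitely many times, so $\limsup>0$ (also against any periodic orbit); and (iii) there is a single subsequence of times, common to all parameters, along which every orbit of the family is driven into nested intervals whose diameters tend to $0$, so $\liminf=0$. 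For each $\alpha$ one produces a realizing point $x_\alpha$ by applying the iterated covering lemma to longer and longer prefixes of $\tau^\alpha$ and taking a point of the resulting decreasing intersection of nonempty compact intervals; distinctness of the $x_\alpha$, the absence of periodic points in $S$, and both limit conditions then follow from (i)--(iii).

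The principal obstacle is precisely this third step. The covering lemma controls only which of $K_0,K_1$ an orbit currently occupies, not its location within, so turning ``the itineraries differ here'' into ``the orbits are far apart here'' genuinely uses the geometry of the configuration $x_3\le x_0<x_1<x_2$ (that $K_0,K_1$ meet only at $x_1$, and that $g$ pushes $K_1$ below $x_0$), while turning ``the itineraries agree on a long block here'' into ``the orbits are within $\varepsilon$'' requires tracking a nested family of preimage intervals whose lengths shrink to zero, not merely their symbolic labels. Building one family of itineraries that is simultaneously aperiodic, pairwise eventually-separated, and synchronously recurrent, while keeping the index set uncountable and the realizing points pairwise distinct, is the delicate combinatorial-analytic core of the argument; by contrast the periodic-orbit half and the covering lemmas themselves are routine applications of the intermediate value theorem.
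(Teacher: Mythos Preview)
The paper does not prove this theorem. \Cref{thm:liyorke} is stated as a classical result, attributed to Li and Yorke \cite{Li75}, and is used purely as a black box to establish \Cref{thm:chaos}; no argument (not even a sketch) is given in the paper. So there is nothing to compare your proposal against.

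That said, your outline is a faithful reconstruction of the original Li--Yorke argument: the covering relations $K_0\to K_1$ and $K_1\to K_0\cup K_1$ via the intermediate value theorem, the iterated covering lemma producing periodic points of every period, and the symbolic-dynamics construction of the uncountable scrambled set are exactly the ingredients of \cite{Li75}. You also correctly identify the scrambled-set step as the delicate one. One minor point: in the period-$k$ argument, the claim that the constructed $p$ has \emph{minimal} period $k$ (rather than a proper divisor) does not follow solely from ``disjointness of interiors together with $x_3<x_1$''; the standard fix is to note that the itinerary visits $K_0$ exactly once per $k$ steps, so a smaller period would force either zero or multiple $K_0$-visits per cycle, and then to rule out the boundary case $p=x_1$ separately (since $g(x_1)=x_2$ and $g^2(x_1)=x_3\le x_0$ already leaves $K_1$). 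This is a detail you would need to tighten in a full write-up, but it does not affect the overall strategy.
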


Notice that if there is a periodic point with period $3$, then the hypothesis is satisfied.

\subsubsection{Li-Yorke Chaos in the Base Fee Updates}
With the above terminology and notation at hand, we now return to the base fee dynamics. In the case of the non-atomic approximation (cf. equation \eqref{eq:fluid}), it holds that $b_{t+1}=g(b_t)$ with the continuous map $g$ defined by
\begin{equation}\label{eq:map}
    g(b)=b+bd\min{\{1,2\lambda\ol{F}(b+\epsilon)-1\}}.
\end{equation}
As we showed in \Cref{lem:fixed}, the dynamics will ultimately enter the bounded interval $X:=[(1-d)b^*,(1-d)b^*]$. Moreover, it holds that $g(b)=b$ for $b\in X$ if and only if $b=b^*$, i.e., $b^*$ is the unique fixed point of function $g$ in $X$. Thus, according to \Cref{def:liyorke} and \Cref{thm:liyorke}, it suffices to show that the continuous map $g:X\to X$ has a periodic point of period $3$, i.e., that there exists a point $b'\in X$ with $b'\neq b^*$, which is a fixed point of $g^3(b)$, i.e., $g^3(b')=b', b'\neq b^*\in X$. The two panels in \Cref{fig:liyorke} illustrate the two possible cases.

\begin{figure}[!htb]
    \centering
    \includegraphics[width=0.4\linewidth]{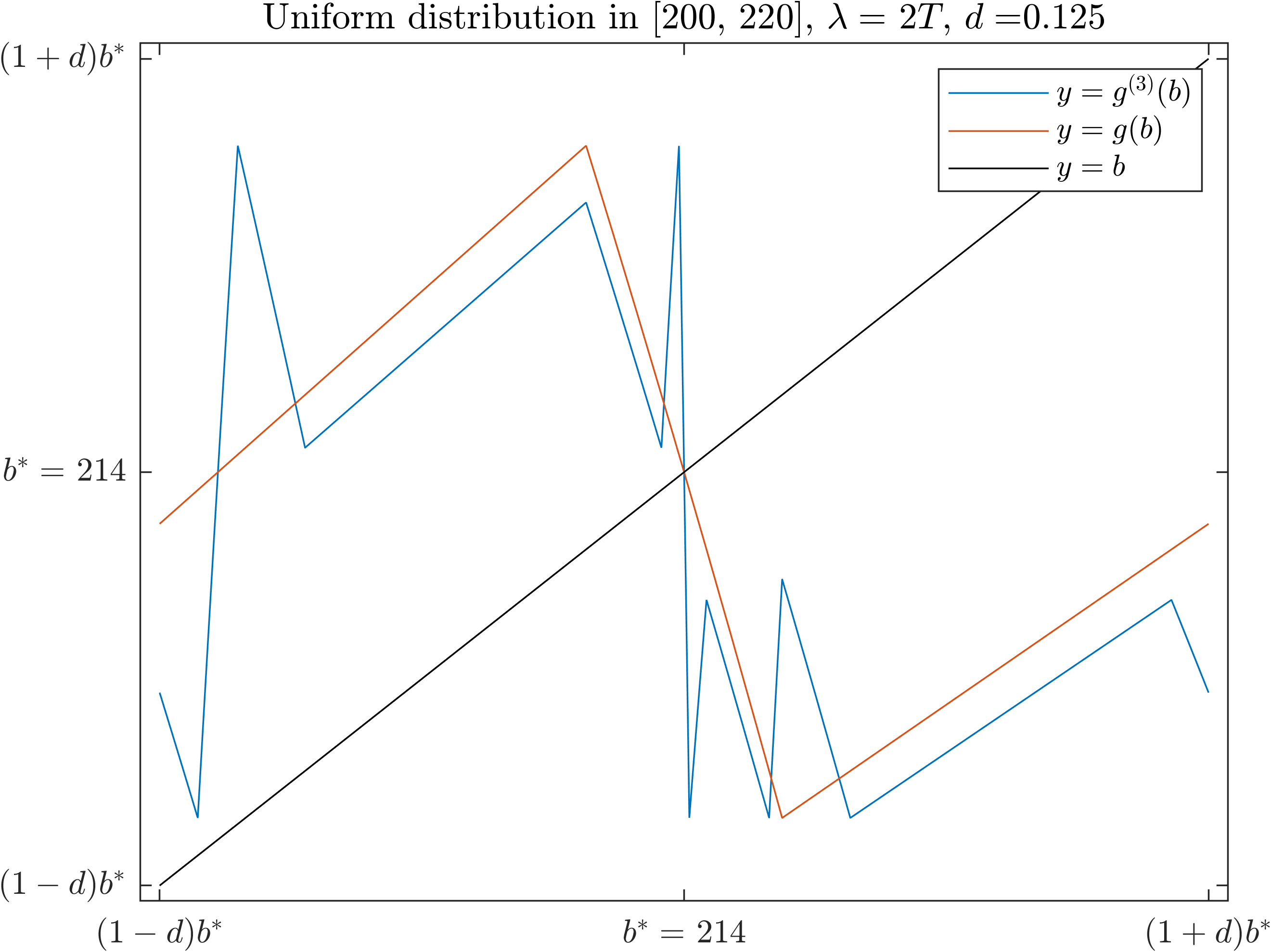}\hspace{15pt}
    \includegraphics[width=0.4\linewidth]{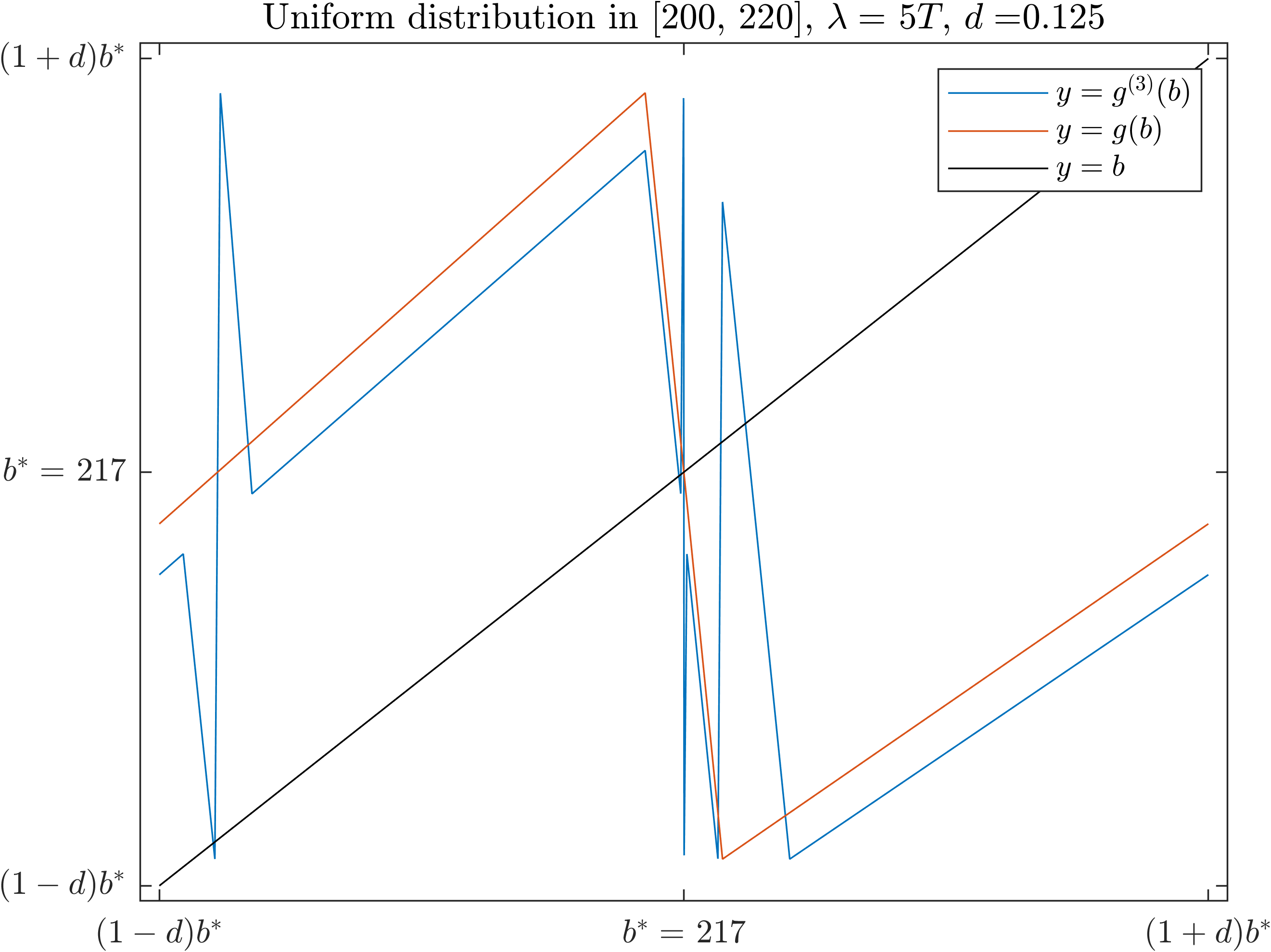}
    \caption{Orbits in the base fee dynamics $b_{t+1}=f(b_t)$. The left panel shows an instance in which the base fee dynamics do not have orbits of period 3 (the graph of $g^{(3)}(b)$ does not intersect the diagonal $y=b$, i.e., $g^{(3)}(b)$ does not have fixed points other than the unique fixed point of $g(b)$). By contrast, the right panel shows an instance with points of period 3 (multiple intersections of $g^{(3)}(b)$ and $y=b$). In this case, the dynamics are Li-Yorke chaotic. In both cases, the step size is equal to $d=0.125$ and the valuations are uniformly distributed in $[200,220]$. The difference is in the demand level which is higher in the chaotic case ($2T$ in the left panel versus $5T$ in the right panel).}
    \label{fig:liyorke}
\end{figure}

In \Cref{thm:chaos}, we invoke \Cref{thm:liyorke} and show the more general case, that for any $d$, there exists a distribution of valuations so that the system becomes chaotic.

\begin{theorem}\label{thm:chaos}
Let $g(b)=b+bd\min{\{1,2\lambda\ol{F}(b+\epsilon)-1\}}$ denote the non-atomic approximation of the update rule for the base fee dynamics $(b_{t})_{t\in \mathbb{N}}$. Then, for any fixed step size $d>0$, there exists a continuous distribution $F$ of valuations, and a point $b_0\in \mathbb R$, so that
\begin{equation}\label{eq:condition}
g^3(b_0)\le b_0<g(b_0)<g^2(b_0), \tag{PO}
\end{equation}
In particular, for any step-size $d$, there exists a distribution of valuations $F$, for which the base fee dynamics become Li-Yorke chaotic.
\end{theorem}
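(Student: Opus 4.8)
The plan is to engineer the valuation distribution $F$ so that the base fee map possesses an \emph{explicit orbit of period three}; by \Cref{thm:liyorke} this immediately yields Li--Yorke chaos. As noted just before the statement, \Cref{lem:fixed} supplies a compact interval $X=[(1-d)b^\ast,(1+d)b^\ast]$ that is forward invariant under the continuous map $g$ of \eqref{eq:map} and on which $b^\ast$ is the only fixed point. Hence it suffices to exhibit a continuous $F$ and three distinct points $b_0<b_1<b_2$ in $X$, none equal to $b^\ast$, with $g(b_0)=b_1$, $g(b_1)=b_2$, $g(b_2)=b_0$: then $b_0$ is a period-$3$ point, condition \eqref{eq:condition} holds (with equality in its first relation), and \Cref{thm:liyorke} applies on $X$; since $X$ is globally attracting, the full base fee dynamics are Li--Yorke chaotic.

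For the construction I would absorb $\epsilon$ into the support (take $\epsilon=0$) and use scale invariance to put $b^\ast=1$, so that $X=[1-d,1+d]$ and the only constraint $b^\ast=1$ imposes on $F$ is $\ol F(1)=1/(2\lambda)$. Write $g(b)=b\,(1+d\,\phi(b))$ where $\phi(b):=\min\{1,2\lambda\ol F(b)-1\}$ is continuous, non-increasing and $[-1,1]$-valued, and is precisely the object we control through $F$. For a prospective cycle $b_0\to b_1\to b_2\to b_0$ the required values of $\phi$ are forced:
\[
\phi(b_0)=\tfrac1d\Big(\tfrac{b_1}{b_0}-1\Big)>0,\qquad
\phi(b_1)=\tfrac1d\Big(\tfrac{b_2}{b_1}-1\Big)>0,\qquad
\phi(b_2)=\tfrac1d\Big(\tfrac{b_0}{b_2}-1\Big)<0.
\]
Such a cycle is realizable by an admissible $F$ provided that: (a) $\phi(b_0)>\phi(b_1)$ (the inequality $\phi(b_1)>\phi(b_2)$ being automatic, the first value positive and the second negative), since $\phi$ — like $\ol F$ — must be non-increasing along $b_0<b_1<b_2$; (b) each ratio $b_{i+1}/b_i$ (indices mod $3$) lies in $(1-d,1+d)$, i.e.\ $\phi(b_i)\in(-1,1)$, and, when $\lambda<1$, also $\phi(b_0),\phi(b_1)<2\lambda-1$ so that the corresponding values of $\ol F$ remain below $1$; and (c) $b_0,b_1,b_2\in X$. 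Granting (a)--(c), one takes $\ol F=1-F$ to be any continuous, strictly decreasing function on a support $[L,U]\subset\mathbb R_+$ (e.g.\ piecewise linear) with $\ol F(L)=1$, $\ol F(U)=0$, $\ol F(1)=1/(2\lambda)$ and $\ol F(b_i)=\tfrac{1+\phi(b_i)}{2\lambda}\in(0,1)$; by (a) and the sign pattern of the $\phi(b_i)$ these prescribed interior values are strictly decreasing, so such an $F$ exists, is of the required continuous non-atomic form, and by construction $g(b_i)=b_i(1+d\,\phi(b_i))$ produces exactly the desired $3$-cycle.

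It remains to pick $b_0<b_1<b_2$ meeting (a)--(c), and this is where the one genuine constraint appears: unwinding (a) gives $b_1^{\,2}>b_0b_2$, i.e.\ the middle orbit point must exceed the geometric mean of the outer two — the footprint of the fact that $g(b)/b=1+d\,\phi(b)$ is non-increasing in $b$. I would meet everything by clustering the cycle tightly around $b^\ast=1$: set $b_0=1-a$, $b_2=1+a^2/8$, and choose $b_1$ in the interval $(\sqrt{b_0b_2},\,1)$ — non-empty because $(1-a)(1+a^2/8)<1$ — for a sufficiently small $a=a(d,\lambda)>0$ (e.g.\ $a=d/10$ at least for $d\le 1$). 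A short computation then checks (b), since each $b_i$ differs from $1$ by $O(a)$ with $O(a)\ll d$ and $O(a)\to0$ as $a\to0$, and (c), since $b_0,b_1\in(1-d,1)$ and $b_2=1+O(a^2)<1+d$.

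The main obstacle is thus not any single hard estimate but the need to honour the monotonicity of $\ol F$ (the condition $b_1^{\,2}>b_0b_2$) while simultaneously trapping the whole cycle inside the narrow attracting window $X$ of width $2db^\ast$; both are dealt with by making the cycle's spread $a$ small relative to $d$. Finally, I would observe that the engineered $F$ can be supported on an arbitrarily short interval, which is precisely the ``small range of valuations $w$, possibly large demand $\lambda$'' picture anticipated in \Cref{ex:edge} and \Cref{fig:liyorke}, and that only the replacement of $1/2$ by $1/(2\lambda)$ is needed for general $\lambda>1/2$.
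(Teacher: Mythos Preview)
Your argument is correct and takes a genuinely different route from the paper's.

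The paper fixes $\lambda=1$ and works with a \emph{given} one-parameter family of distributions, the uniform law on $[\mu-\tfrac12,\mu+\tfrac12]$. It places $b_0$ below the support (so $F(b_0)=0$ and $g(b_0)=b_0(1+d)$), lands $b_1=g(b_0)=\mu-\delta$ just inside the support, pushes $b_2=g(b_1)$ above the support (so $F(b_2)=1$ and $g(b_2)=b_2(1-d)$), and then checks $b_3\le b_0$. The whole construction reduces to three explicit inequalities in $(\mu,\delta)$ which are satisfied for $\delta\le d/(2(1-d^2))$ and $\mu$ sufficiently large; the orbit thus straddles the entire support and the Li--Yorke hypothesis is met with a possibly strict inequality $g^3(b_0)<b_0$.

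You instead \emph{reverse-engineer} $F$: you first choose an exact $3$-cycle $b_0<b_1<b_2$ clustered near $b^\ast=1$ and strictly inside the attracting window $X$, read off the required values of $\phi=\min\{1,2\lambda\ol F-1\}$ at those points, and then realise them with a piecewise-linear $\ol F$, the only nontrivial compatibility condition being the monotonicity constraint $b_1^{\,2}>b_0b_2$. What this buys you is (i) an exact period-$3$ orbit that lives entirely in $X$, so the application of \Cref{thm:liyorke} on the forward-invariant interval $X$ is immediate; (ii) a construction that works verbatim for every $\lambda>1/2$; and (iii) a very short verification, since all $\phi(b_i)$ are $O(a/d)$ and all constraints are met once $a=a(d,\lambda)$ is small. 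What the paper's approach buys is that chaos is exhibited for the \emph{uniform} distribution itself---a natural benchmark family---and the calculation makes transparent the ``large mean relative to width'' (equivalently, small range $w$) heuristic emphasised in \Cref{ex:edge} and \Cref{sub:bifurcation}. Both proofs tacitly use $d<1$, as the paper also does.
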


\begin{proof}
The proof is constructive and proceeds by creating a specific instance of the uniform distribution for which condition \eqref{eq:condition} is satisfied. Then, the claim that the dynamics are Li-Yorke chaotic follows from \Cref{thm:liyorke}. To create such an instance, let $F\sim\text{Uniform}[\mu-1/2,\mu+1/2]$ for some $\mu>0$. Also, let $\lambda=1$ and as above, assume without loss of generality that $\epsilon=0$ (e.g., by properly rescaling the distribution $F$). Based on these assumptions, it holds that $1>2\ol{F}(b)-1$ for any $b>0$ which implies that the update rule, $g$, of the non-atomic model becomes
\[g(b)=b(1+d-2dF(b)).\]
We will now show that we can construct a sequence of points $b_0, b_1=g(b_0), b_2=g^2(b_0)$ and $b_3=g^3(b_0)$ with the following properties
\begin{enumerate}[label=(\roman*)]
    \item $b_0\le\mu-1/2$,
    \item $b_1=g(b_0)=\mu-\delta$, for some $\delta>0$ sufficiently small (to be determined later),
    \item $b_2=g(b_1)\ge\mu+1/2$.
\end{enumerate}
We start by selecting an arbitrary $b_0> 0$ that satisfies property (i). Thus, it holds that $F(b_0)=0$ (since $b_0<\mu-1/2$ which is the lower bound of the support of the distribution of valuations) which implies that $b_1=g(b_0)=b_0(1+d)$. Combining this with property (ii), yields the first necessary condition, $b_0(1+d)=\mu-\delta$, or equivalently
\begin{equation}\label{eq:cond1}
     b_0=\frac{\mu-\delta}{1+d}, \quad \text{for some }\delta \in(0,\mu),\tag{$\star$}
\end{equation}
where the condition $\delta<\mu$ ensures that $b_0>0$ as assumed. Plugging this into property (i) yields the condition
\begin{equation}\label{eq:cond2}
\frac{\mu-\delta}{1+d}\le \mu-1/2 \Rightarrow 2d\mu-1-d+\delta\ge0. \tag{C1}
\end{equation}
Next, we calculate $b_2=g^2(b_0)=g(b_1)$. Since $b_1=\mu-\delta=b_0(1+d)$, we can determine $g(b_1)$ as follows
\begin{align*}
g(b_1)&=b_0(1+d)(1+d-2dF(\mu-\delta))\\
&=b_0(1+d)\left(1+d-2d\frac{\mu-\delta-(\mu-1/2)}{\mu+1/2-(\mu-1/2)}\right)\\
&=b_0(1+d)(1+d-2d(1/2-\delta))\\
&=b_0(1+d)(1+2d\delta)=(\mu-\delta)(1+2d\delta),
\end{align*}
where the last equality follows from \eqref{eq:cond1}. Thus, $b_2=g^2(b_0)=(\mu-\delta)(1+2d\delta)>b_1=g(b_0)$. Further, if property (ii) holds, i.e., if $b_2\ge\mu+1/2$, or equivalently if
\begin{equation}\label{eq:cond3}
(\mu-\delta)(1+2d\delta)\ge\mu+1/2,\tag{C2}
\end{equation}
(which gives a second necessary condition), then it holds that $F(b_2)=1$. This allows us to calculate $b_3=g^3(b_0)=g(b_2)$ as follows
\[g(b_2)=b_2(1+d-2d\cdot1)=b_2(1-d)=b_0(1+d)(1+2d\delta)(1-d).\]
Thus, $b_3=g(b_2)<b_2$ and it remains to show that $b_3\le b_0$. This yields the third necessary condition
$b_0(1+d)(1+2d\delta)(1-d)\le b_0$ or equivalently (assuming that $d<1$ as it is in practice)
\begin{equation}\label{eq:cond4}
  \delta\le \frac{d}{2(1-d^2)}\,. \tag{C3}
\end{equation}
In sum, given $d>0$, we need to select $\mu>0$ and $\delta\in(0,\mu)$ so that conditions \eqref{eq:cond2}, \eqref{eq:cond3} and  \eqref{eq:cond4} are satisfied simultaneously (note that we already used \eqref{eq:cond1} in the formulation of \eqref{eq:cond2}). This gives the system
\begin{eqnarray*}
2d\mu-1-d+\delta &\ge\,\, \, 0, & \eqref{eq:cond2}\\
d\delta\mu-d\delta^2-\delta-1/2&\ge\,\,\,0, & \eqref{eq:cond3}\\
\frac{d}{2(1-d^2)}&\ge\,\,\, \delta,  & \eqref{eq:cond4}
\end{eqnarray*}
Thus, if we select any $\delta>0$ that satisfies condition \eqref{eq:cond4}, it is immediate to see, that by selecting $\mu$ large enough, conditions \eqref{eq:cond2} and \eqref{eq:cond3} are always satisfied (since $\mu$ appears only in one term with positive scalars). Specifically, if we solve \eqref{eq:cond2} and \eqref{eq:cond3} for $\mu$, we obtain the (always feasible) condition
\begin{equation}\label{eq:cond5}
\mu\ge \max{\left\{\frac{1+d-\delta}{2d},\frac{d\delta^2+\delta+1/2}{\delta d}\right\}},  \tag{C4}
\end{equation}
which together with the selected $\delta$, yields an admissible solution of the initial system. Note that the second term inside the $\max$ is always larger than $\delta$, since $\frac{d\delta^2+\delta+1/2}{\delta d}=\delta+\frac{1}{d}+\frac{1}{2\delta d}>\delta$ which implies that any $\mu$ that satisfies condition \eqref{eq:cond5} immediately satisfies $\mu>\delta$ as required for $b_0$ to be positive (and thus for $b_0$ to be less than $b_1$). In sum, we have shown that if we select a point $b_0\le \mu-\delta$ where $\mu,\delta$ satisfy conditions \eqref{eq:cond4} and \eqref{eq:cond5}, then it holds that $g^3(b_0)\le b_0 < g(b_0)<g^2(b_0)$, which concludes the proof.
\end{proof}
Note that the construction in the proof of \Cref{thm:liyorke} was based in a \emph{favorable} scenario for stability which assumed $\lambda=1$. For higher values of $\lambda$, the construction still applies and in fact, chaos obtains for a much wider range of parameters (see \Cref{sub:bifurcation}). Moreover, the selection of the uniform distribution in the proof is not binding and the proof idea applies for arbitrary distributions. This is illustrated in the next example which concludes this section.

\begin{example}[A Specific Instance with Period 3]
Let $b_0>0$ take an arbitrary value and assume that the cumulative distribution function, $F(b)$ of the valuations is continuous and satisfies the following conditions: $F(0.64 b_0) = 6/32$, $F(0.8b_0) = 11/18$, $F(b_0) = 11/18$ and $F(1) = 1$. Assume that $\lambda=1$ and that $F$ is rescaled so that $\epsilon=0$ (as above). Then, for $d = 9/10$, we have that $b_1 = g(b_0)=b_0(1  +(9/10)(2(1-7/18)-1))= 0.8b_0$, $b_2=g^2(b_0)=0.8b_0(1+(9/10)(2(1-11/18)-1)=0.64b_0$ and $b_3=g^3(b_0)=0.64b_0(1+(9/10)(2(1-6/32)-1)=b_0$. We, thus, get an example with period $3$.
\end{example}

\subsection{Bifurcation Diagrams}\label{sub:bifurcation}

The previous paragraphs suggest that there are ranges of parameters for which the base fee dynamics converge and ranges of parameters for which they become Li-Yorke chaotic. The system is more prone to chaotic behavior as the step-size, demand (users that submit transactions) or concentration of valuations increase. In this paragraph, we visualize the long-term behavior of the base fee dynamics and the transitions through the various regimes as the critical input parameters of the system change. Again, for expositional purposes, we restrict attention to uniform distribution of valuations on the interval $[L,U]=[210-w/2,210+w/2]$.\footnote{Simulations with different distributions such as triangle distribution or normal produce qualitatively equivalent results which are not presented here.} The results are shown in the \emph{bifurcation diagrams} in \Cref{fig:bifwld}.
\begin{figure}
    \centering
    \includegraphics[width=0.48\linewidth]{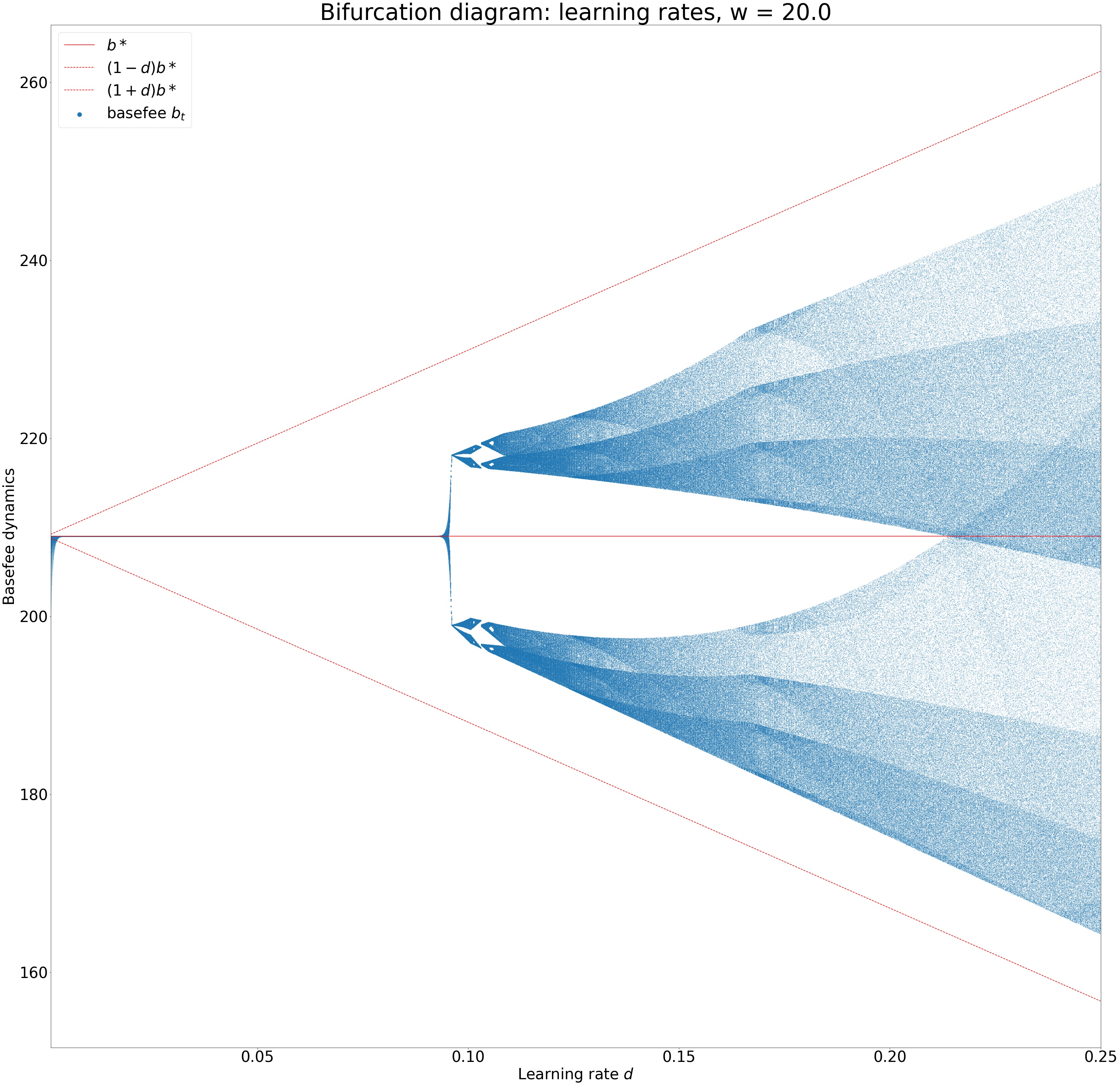}\hfill
    \includegraphics[width=0.48\linewidth]{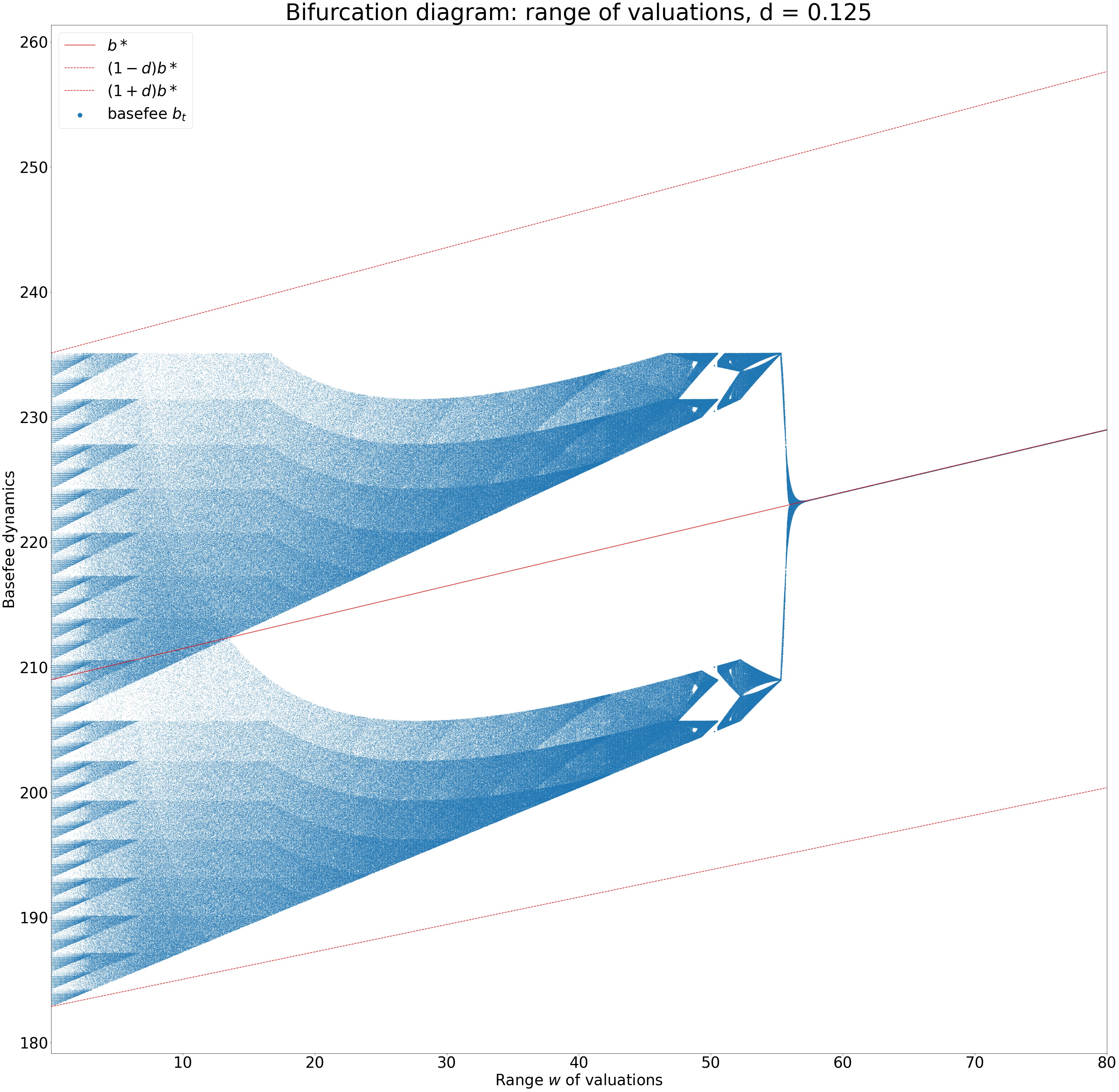}
    \caption{Bifurcation diagrams for the input parameters, $w$ (range of valuations) and $d$ (step-size) of the non-atomic base fee dynamics of equation \eqref{eq:fluid}. Left panel: route from order to chaos as the step-size increases. Right panel: route from chaos to order as the range of valuations increases.}
    \label{fig:bifwld}
\end{figure}

The horizontal axis of each diagram corresponds to the varying parameter, $w$ and $d$ respectively, with all other parameters being fixed.\footnote{The bifurcation diagram for parameter $\lambda$ is similar and is not presented here.} The vertical axis shows the attractor of the base fee dynamics (blue dots) for 400 updates (after a burn-in period of 100 updates) and the $[(1-d)b^*,b^*,(1+d)b^*]$ bounds. Interestingly, the transition from the stable (convergent) to the chaotic regime does \emph{not} occur by a period doubling (as is typical in most game-theoretic applications of chaos theory) \cite{Cho21,bielawski2021followtheregularizedleader}. For practical purposes, the important observation is that these phase transitions occur abruptly for small changes in the parameter values.




\section{Experiments}
\label{sec:experiments}

We describe here the main components and results from a simulation environment created to replicate the Ethereum transaction fee market.

\subsection{Simulation environment}

\subsubsection{Chain dynamics}

Blocks in Ethereum are produced by miners in a random, iterative process. A block builds on a chain of predecessors, such that the chain length always increases in time. The consensus algorithm ensures all participants (miners and users) agree on the current head of the chain. In the following simulations, we adopt the parameter choices of EIP 1559, namely, a gas target of 12,500,000, gas limit of 25,000,000 and update rate parameter $d = 0.125$.

The simulations assume that a unique block is produced at all chain heights and that all participants receive this block with no latency. In particular, once a block is published, all users observe the updated basefee instantly. We assume randomness in the block arrival times. In Ethereum, block arrival times typically follow a Poisson process of rate $\eta = 13$ seconds under the assumption of no latency.\footnote{Average block time chart \url{https://etherscan.io/chart/blocktime}}

\subsubsection{User behavior}

User values $v$ are sampled from a fixed distribution $F$. We assume users all send transactions consuming the same amount of gas $\gamma$, without loss of generality. The values are expressed as benefit received per unit of gas, thus if the user's transaction is included, the user receives $\gamma v$ utility. The parameter value $\gamma$ is obtained by computing the average gas used in all blocks from a sample period between block 10,900,000 (timestamp Sep-20-2020 03:17:06 PM UTC) to block 10,942,000 (timestamp Sep-27-2020 02:40:14 AM UTC). We then take the median of the series of average gas used per block, rounded to the nearest multiple of 1,000, to provide a sensible $\gamma$ estimate. The procedure yields $\gamma = 76,000$, for a maximum number of transactions in the block equal to 328, given the gas limit of 25,000,000. User values are chosen uniformly at random in the interval from 10 to 100 Gwei.

Users typically transact on the Ethereum chain through \textit{wallets}, which provide fee estimation and generate transaction parameters such as the gas limit and data payload (e.g., inputs and names of function calls for smart contract interactions). In the current, first-price auction-based fee market, wallets typically provide fee estimation by computing statistics from historical transaction inclusion, e.g., the median fee paid by any transaction over the last 200 blocks. While wallet behaviors in 1559 are not currently known, we use the following design:

\begin{itemize}[leftmargin=*]
    \item When the previous block was not close to full (less than 90\% gas used), the wallet sets the fee cap parameter $f$ to a fixed value derived from the basefee (we use three times the current basefee) and the premium $p$ to the commonly agreed $\epsilon$ miner marginal cost. We guarantee that fee cap covers at least the premium by setting a lower bound.
    \item When the previous block was close to full (above 90\% gas used), the wallet adds an increment to the average tip recorded in the previous block, inducing competition between users while basefee matches the new demand.
\end{itemize}

\subsubsection{Demand process}

For convenience, we introduce two time indices: $s, t$ refer to chain heights (measured in blocks), while $k$ refers to simulation time (measured in seconds). As block inter-arrival times are random, we first generate a demand process $(D_k)_k$ returning for all time indices $k$ an integer-valued demand volume. $D_k$ is interpreted as ``users producing transactions between seconds $k-1$ and $k$''.

To generate $(D_k)_k$, we sample Brownian motion (BM) paths with initial condition $D_0$, mean 0 and variance $\sigma^2$. We obtain demand paths that feature periods of increasing and decreasing volumes due to the randomness of the BM, yet do not explicitly have a positive or negative trend. In addition, we simulate random ``jumps'' where a mass of users is generated at random intervals, decaying over the next steps at a rate $\delta$, to reproduce instances where an on-chain event brings a sudden influx of new users (e.g., token sale). Formally, our demand process satisfies at time $k$:
$$D_k = W_k + J_k; \quad J_k = (1-\delta) J_{k-1} + \sum_{j = 1}^{M_k} \zeta_j \cdot \textbf{1}_{\lfloor \kappa_j \rfloor = k}; \quad J_0 = 0$$
where $W_k$ is a discretized Brownian motion (in this case, a random walk with normal increments of mean 0 and standard deviation $\sigma$); $M_k$ is a Poisson process of rate $\lambda_j$ evaluated at time $k$; $\zeta_j$ is an exponential random variable of mean $B_0$, modelling a demand jump; and $\lfloor \kappa_j \rfloor$ is the time index where the $j$-th jump occurred. Figure~\ref{fig:bmpaths} depicts the sample paths for one value of demand variance and initial condition.

\subsubsection{Transaction pool behavior}

Miners run Ethereum nodes exchanging data (including transactions) with other nodes over a peer-to-peer network. Nodes are either run by miners, users or third parties, to relay this data in a decentralized fashion. A user either directly sends their transactions from their own node or indirectly from a third party node, who receives the transaction from the user via some communication protocol.

While only miner nodes eventually produce blocks, all nodes feature a transaction pool that holds pending transactions and continuously receives or sends items to other nodes, as requested. All nodes are free to decide in practice which transaction pool policies to apply, including the choice of the maximum number of pending transactions held in the transaction pool at any point in time. Geth, Ethereum's dominant node client as of February 2021, holds by default a maximum of 4096 transactions in the pool.\footnote{See \url{https://geth.ethereum.org/docs/interface/command-line-options} for defaults, \url{https://www.ethernodes.org/} for client statistics.}

In our simulations, we abstract this peer-to-peer network of transaction pools with differing policies into one logical transaction pool, which instantly receives all user transactions, applies the same pool policy at all time steps and is used by all miners to form their blocks. The possibility that various pool policies will affect transaction transmission is not considered.

\subsubsection{Miner behavior}

Much like the logical transaction pool described above, we also consider a single logical miner representing all miners who produce blocks. \cite{roughgarden2020transaction} provides evidence for the incentive - compatibility of miner myopic strategies, who maximize greedily the available fees at the time of their block production. Thus, we do not consider the possibility of long-range attacks in our simulations, where a cartel of miners colludes to lower the basefee to zero to enforce a monopolistic price of entry. Given the set of pending transactions in the pool, miners order transactions by received tip (in decreasing order) and include as many transactions as possible, until the block limit is reached or there are no more valid transactions to include.

\subsubsection{Simulation steps}

We provide below a description of a single simulation step, articulating how the various components are employed.

\begin{enumerate}[leftmargin=*]
    \item The previous block $B_{t-1}$ produced at chain height $t-1$ is received by all participants.
    \item An inter-arrival time $\eta_t$ is sampled from an exponential distribution of mean $\eta = 13$, such that the block at height $t$ is created at time index $\theta_t = \sum_{s = 1}^{t} \eta_s$.
    \item Given the demand process $(D_k)_k$, where $k$ is an index over seconds, we obtain $N_t = \sum_{k=\theta_{t-1}}^{\theta_t} D_k$. $N_t$ is the number of users entering the market between blocks $t-1$ and $t$, included at the earliest in block $B_t$.
    \item The $N_t$ users observe the current chain state (e.g., the basefee level) and decide whether or not to transact. User transactions are formed via \textit{wallets} which encode shared strategies.
    \item Transactions are received by miner transaction pools, which hold a set of pending transactions from previous simulation steps. All the while, transaction pools apply eviction policies in order to manage their limited resources.
    \item The miner producing $B_t$ selects transactions from the pool to maximize their fees. The set of selected transactions must be smaller than the block limit.
    \item Repeat from step 1.
\end{enumerate}

\begin{figure*}[t]
    \centering
    \includegraphics[width=\linewidth]{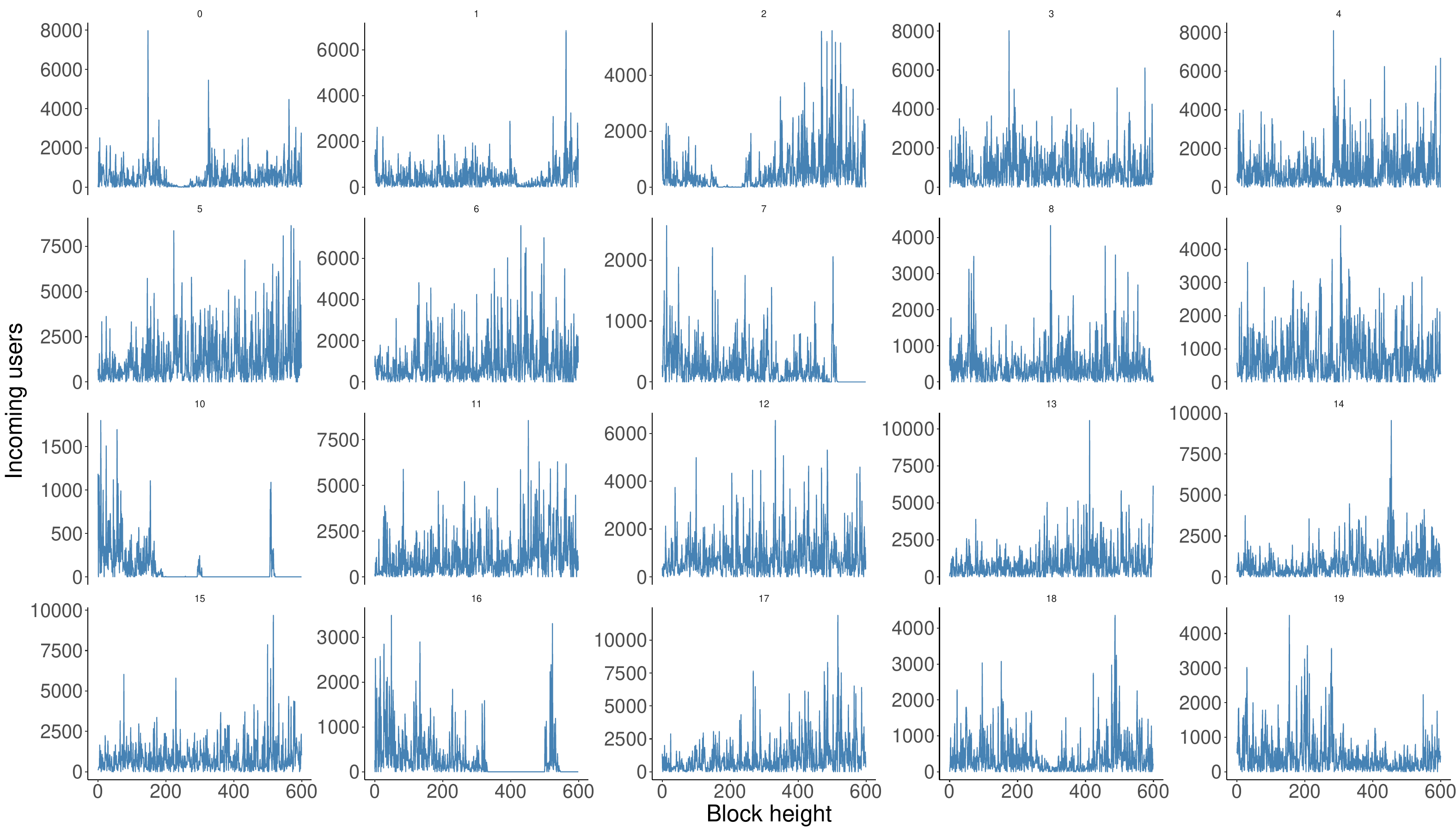}
    \caption{Demand paths at standard deviation $\sigma = 0.5$, $D_0 = 2 \frac{T}{\gamma \eta}$.}
    \label{fig:bmpaths}
\end{figure*}

\subsection{Simulation results}

\subsubsection{Independent and dependent variables}

We focus our attention on three independent variables:

\begin{itemize}[leftmargin=*]
    \item The demand path variance $\sigma^2$: We choose $\sigma \in \{ 0.1, 0.5, 1.0 \}$.
    \item Initial condition of the Brownian motion $D_0$: Given the mean block inter-arrival time $\eta$, we select $D_0$ to reproduce conditions of low, medium and high demand. With $D_0$ set to $\frac{T}{\gamma \eta}$, we target a user arrival rate that on average is exactly enough to fill the block to its limit (i.e., to twice its target). Thus we select $D_0 \in \{ \frac{T}{\gamma \eta}, 1.5 \frac{T}{\gamma \eta}, 2 \frac{T}{\gamma \eta} \}$.
    \item Transaction pool eviction tolerance $\tau$: size of the eviction band, i.e., evict all transactions with fee cap smaller than $(1-\tau)$ times basefee. $\tau = 0$ is the strictest policy (remove all transactions with fee cap smaller than current basefee), $\tau = 1$ is the most permissive (keep everything, modulo the pool limit size). We choose $\tau \in \{ 0, 1/3, 2/3, 1 \}$.
\end{itemize}

We sample 20 Brownian motion samples (see Figure~\ref{fig:bmpaths}). Each sample yields nine distinct paths, one for each value of standard deviation $\sigma$ and initial condition $D_0$, i.e., 180 paths. For each path, one simulation is run for each value of $\tau$, yielding 720 sample runs. Each run consists of 600 blocks, representing approximately half a day of activity on Ethereum. The first 100 blocks of each run are discarded from the analysis, as they represent initial conditions where basefee has not yet matched the existing demand.

Our dependent variable measures the variance of recent realisations of the percentage of gas used by the block. In Section~\ref{sec:analysis}, chaotic behavior obtained rapid variations of the gas used, from mostly empty block to mostly full blocks. Experimentally, we measure a moving standard deviation of the series of gas used, with window size 4. The maximal standard deviation $s^*$ is achieved whenever the four values alternate between 0 and 100. We call \textit{high variance} time steps where the moving standard deviation is at least 95\% of $s^*$. The percentage of high variance time steps among all simulation steps is our dependent variable.

\subsubsection{Higher variance, higher initial demand and more permissive pool policies increase high variance periods} We reproduce in Figure~\ref{fig:highvar} the percentage of high variance steps averaged over sample runs for each single value of treatment variables. We observe consistent increases in high variance steps as the demand variance increases, the initial demand level increases or the pool eviction policy is more permissive, as evidenced by the two samples presented in Figure~\ref{fig:sample}.

The pool eviction has the sharpest contrast between levels of the variable. While high variance steps almost never occur with the strictest pool policy (never keeping a transaction with fee cap inferior to the current basefee), even a mild increase of the tolerance to 1/3 induces a level of high variance steps comparable to any further increase of the tolerance.

\begin{figure}
  \centering
  \begin{subfigure}{.3\linewidth}
    \centering
    \includegraphics[width = \linewidth]{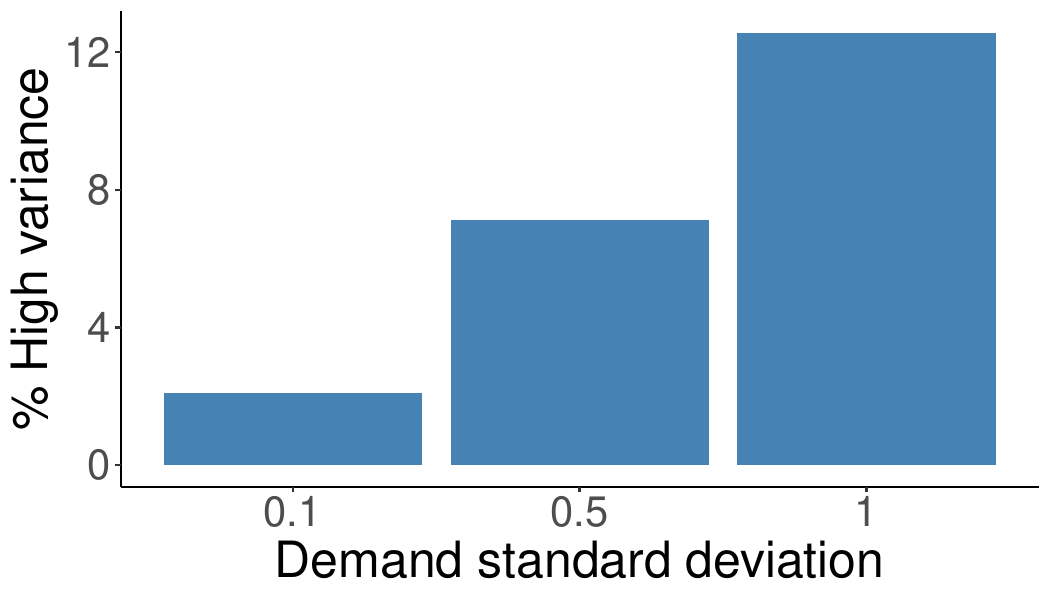}
  \end{subfigure}
  \hfill
  \begin{subfigure}{.3\linewidth}
    \centering
    \includegraphics[width = \linewidth]{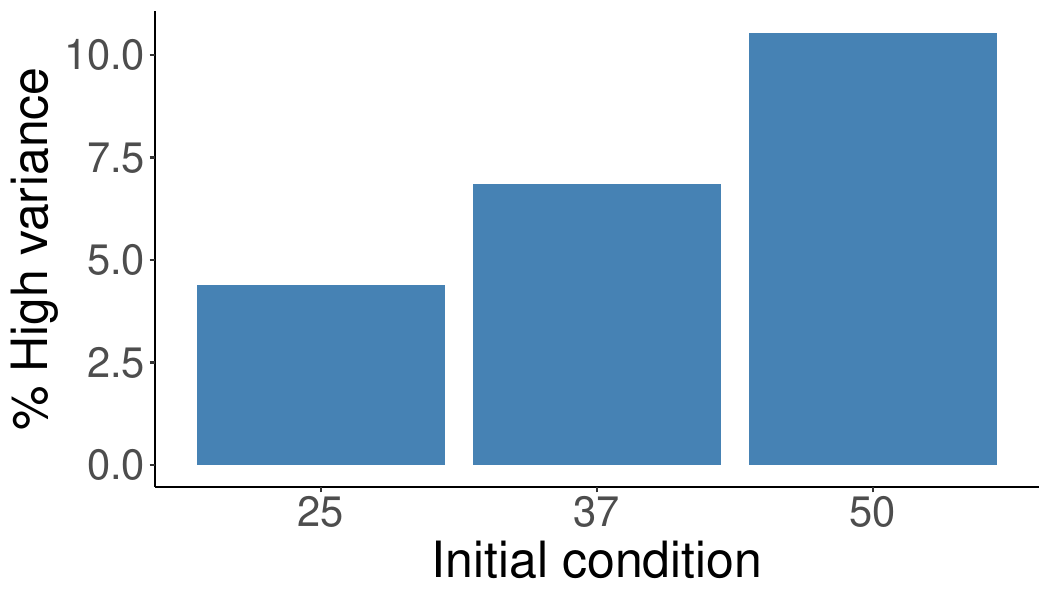}
  \end{subfigure}
  \hfill
  \begin{subfigure}{.3\linewidth}
    \centering
    \includegraphics[width = \linewidth]{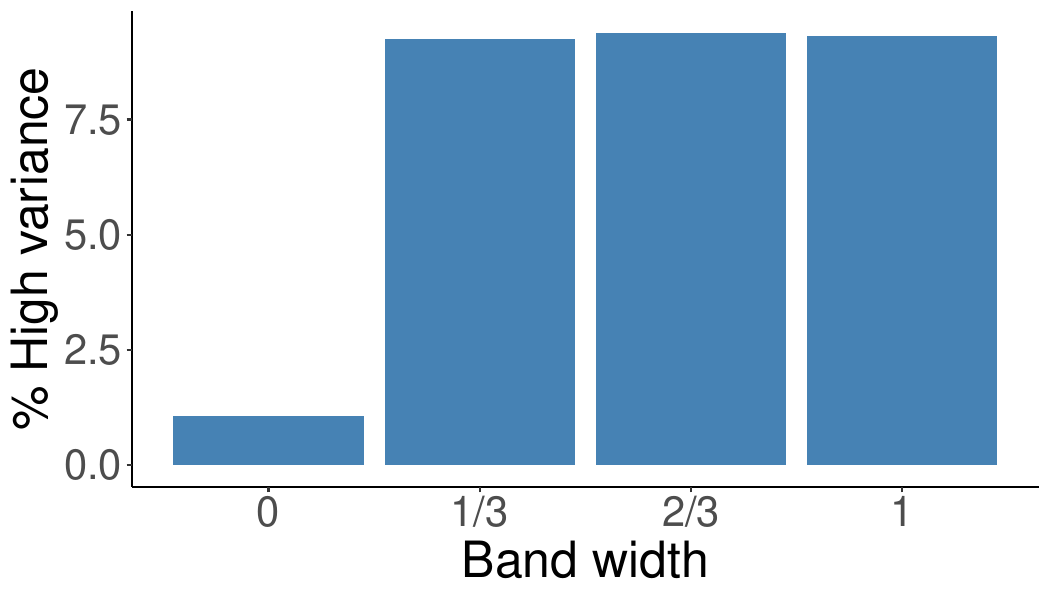}
  \end{subfigure}
  \caption{Percentage of high variance simulation steps with changing demand variance, initial condition and transaction pool eviction tolerance.}
  \label{fig:highvar}
\end{figure}

\subsubsection{Stricter pool policies trade-off user efficiency and miner revenue} In the simulations, user values are randomly sampled from the uniform distribution, with the randomness seeded by the index of the Brownian motion (BM) sample, such that runs with identical demand paths, demand variance and initial condition generate the same users. This allows for comparison of two more dependent variables, \textit{user efficiency} and \textit{miner revenue}, given the band width $\tau$ as independent variable and controlling for demand variance and initial conditions.

User efficiency measures the total benefits received by all users included in the chain. EIP 1559 is efficient whenever users with the highest value are included on-chain.\footnote{In this work, we do not consider users with time preferences.} Miner revenue consists of the received tips, either at the marginal cost level 1 Gwei per gas unit or higher whenever users are competitively bidding.

An experiment is represented by a choice of triple (BM index, demand variance, initial condition), with the band width $\tau$ taken as the independent variable. In all experiments, increasing the band width decreases both the user efficiency as well as miner revenue. This result is explained by the dynamics of the pool itself. By keeping transactions that are not currently valid for inclusion, miners have ``inventory'' to spend whenever demand is low and the basefee has decreased enough to make the transactions valid. This inventory however represents a danger to the stability of the basefee, as a bottleneck of transactions may accumulate in the pool, all becoming valid at the same instant and provoking basefee spikes and instability.

\begin{figure}
  \centering
  \begin{subfigure}{.3\linewidth}
    \centering
    \includegraphics[width = \linewidth]{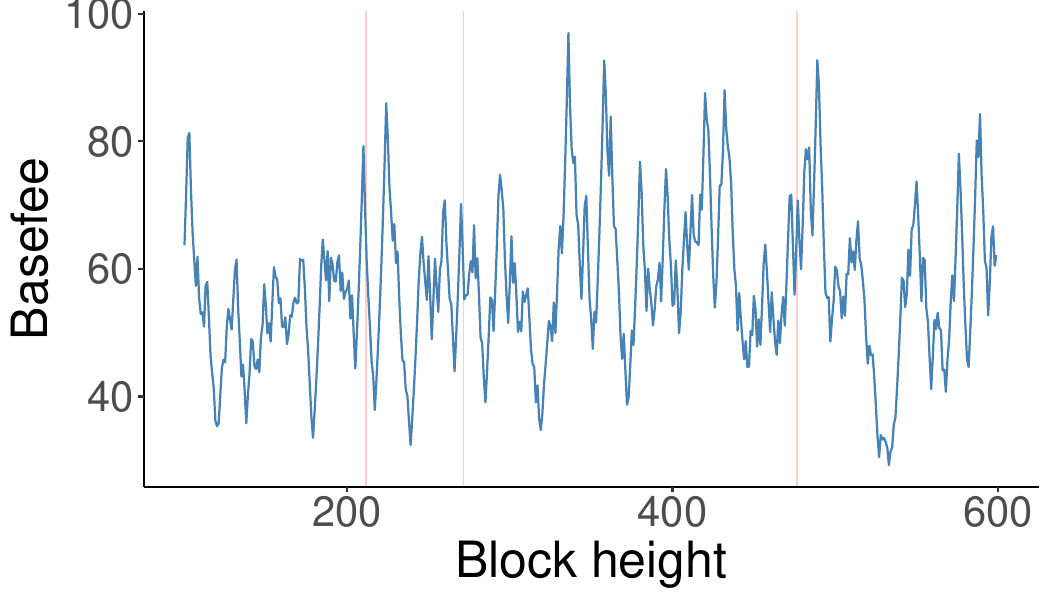}
  \end{subfigure}
  \hfill
  \begin{subfigure}{.3\linewidth}
    \centering
    \includegraphics[width = \linewidth]{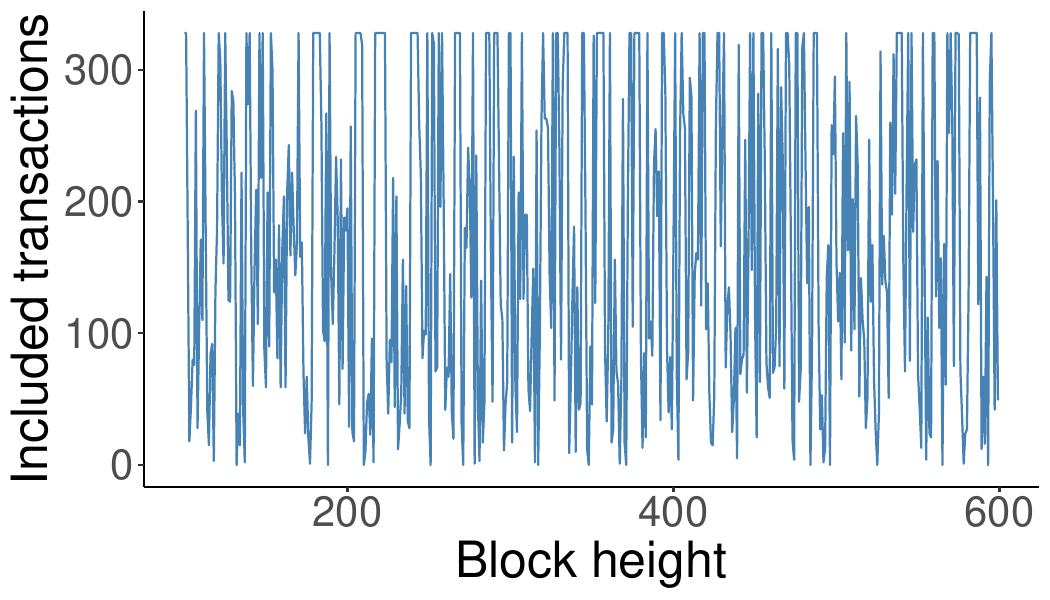}
  \end{subfigure}
  \hfill
  \begin{subfigure}{.3\linewidth}
    \centering
    \includegraphics[width = \linewidth]{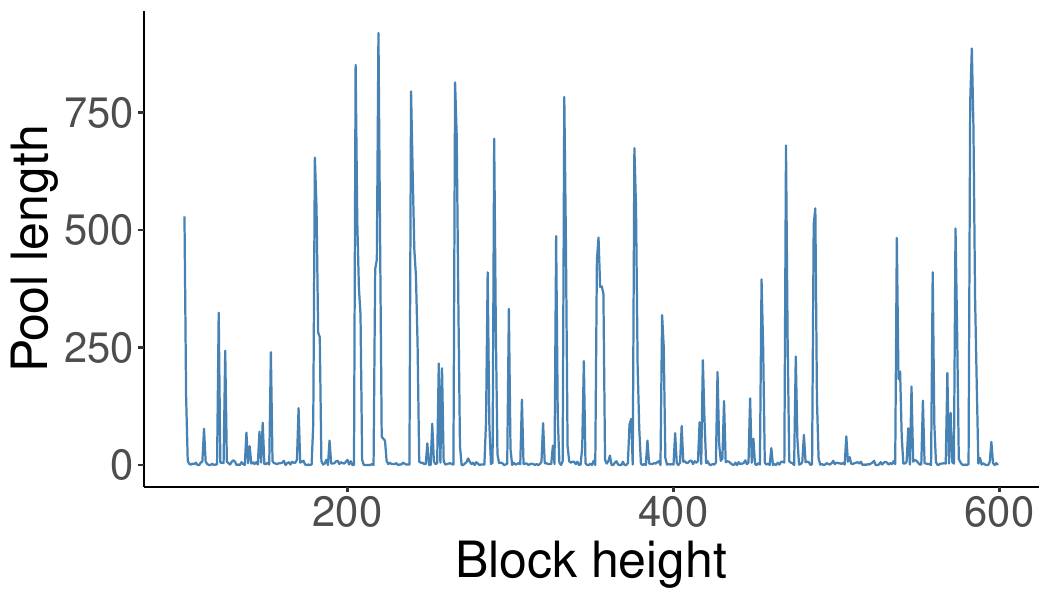}
  \end{subfigure}\\
  \begin{subfigure}{.3\linewidth}
    \centering
    \includegraphics[width = \linewidth]{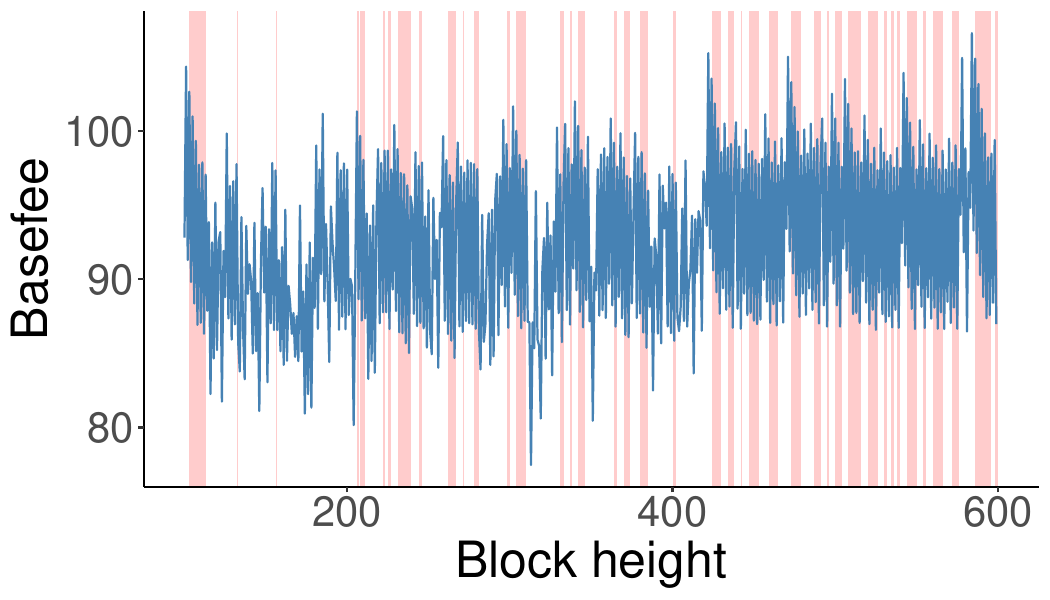}
  \end{subfigure}
  \hfill
  \begin{subfigure}{.3\linewidth}
    \centering
    \includegraphics[width = \linewidth]{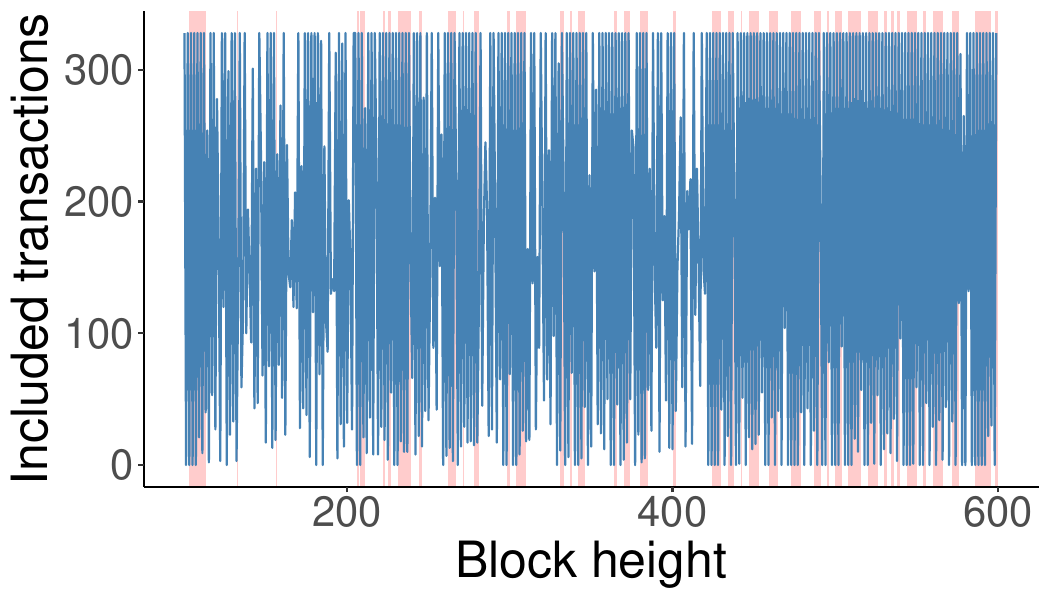}
  \end{subfigure}
  \hfill
  \begin{subfigure}{.3\linewidth}
    \centering
    \includegraphics[width = \linewidth]{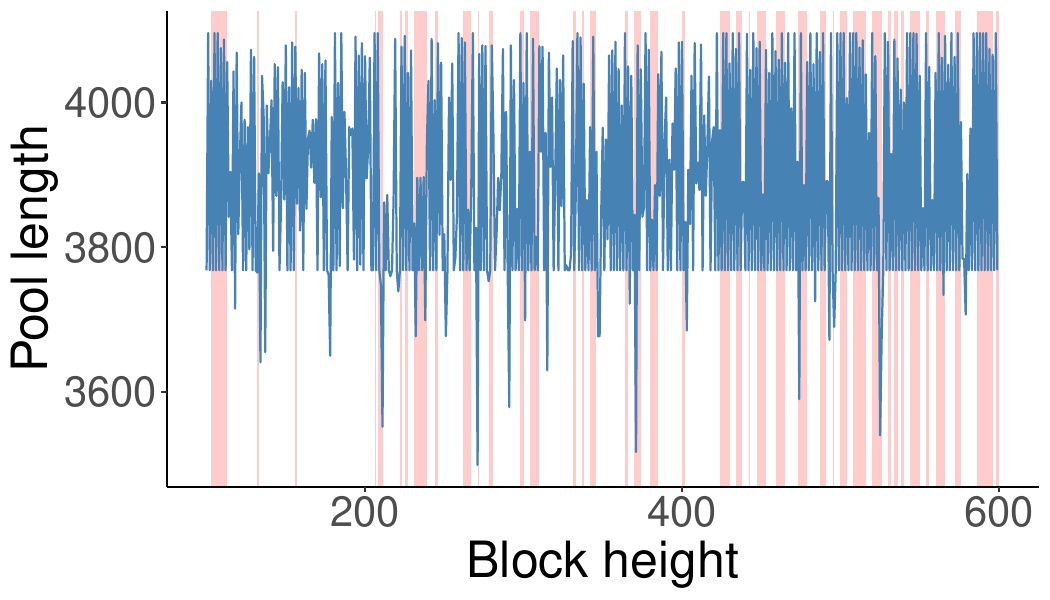}
  \end{subfigure}
  \caption{Two sample runs, one per row, with line plots for basefee, number of included transactions and transaction pool length. High variance periods are represented by red bars in the background. The first sample has $D_0 = \frac{T}{\gamma \eta}$, $\sigma = 0.1$, $\tau = 0$ and features few high variance periods. The second sample has $D_0 = 2\frac{T}{\gamma \eta}$, $\sigma = 1$, $\tau = 1$ and features more high variance periods. Additionally, the transaction pool is continuously full, as the pool eviction policy is most permissive.}
  \label{fig:sample}
\end{figure}
\section{Conclusions}\label{sec:conclusions}

Ethereum's improvement proposal (EIP) 1559 is aiming to transform the transaction fee market of the Ethereum blockchain via a dynamic pricing mechanism. The core element of the mechanism is a fixed-per-block network fee (termed basefee) that is burned and dynamically expands/contracts block sizes to deal with transient congestion \cite{Con19}. Our goal in this paper was to stress-test the basefee both theoretically and experimentally and understand its effects on regulating the transaction fees and block occupancies.\par
A concrete outcome of both our theoretical and experimental analysis is the importance of the basefee adjustment parameter in the performance of the mechanism. Our findings provide insights about the conditions under which the basefee self-stabilizes but also characterize extreme operational scenarios under which its dynamics become chaotic. In particular, we showed that EIP1559 has promising properties (convergence guarantees under various conditions) to convey stability to the fee market and identified sources of concern that may destabilize the system into regimes of chaotic behavior. Our work develops a systematic framework that combines elements from mechanism design, dynamical systems and chaos theory and which aims to aid the ongoing study of transaction fee markets in blockchain-based economies. 

\section*{Acknowledgements}

Stefanos Leonardos and Stratis Skoulakis acknowledge support from NRF 2018 Fellowship NRF-NRFF2018-07. Barnab\'e Monnot acknowledges support from the Ethereum Foundation. Daniel Reijsbergen acknowledges NRF Award No.\ NSoE DeST-SCI2019-0009.
Georgios Piliouras acknowledges support from NRF2019-NRF- ANR095 ALIAS grant, grant PIE-SGP-AI-2018-01, NRF 2018 Fellowship NRF-NRFF2018-07 and the Ethereum Foundation.

\bibliographystyle{plain}
\bibliography{bib_auctions}

\appendix

\end{document}